\begin{document}

\title{On the Parameterized Complexity of Associative and Commutative Unification\thanks{
This work was partially supported by the Collaborative Research Programs
of National Institute of Informatics.}}
\titlerunning{}

\author{Tatsuya Akutsu\inst{1}
\and Takeyuki Tamura\inst{1}
\and 
Atsuhiro Takasu\inst{2}}

\institute{Bioinformatics Center, Institute for Chemical Research,
Kyoto University,\\
Gokasho, Uji, Kyoto, 611-0011, Japan.\\
\email{\{takutsu, tamura\}@kuicr.kyoto-u.ac.jp}\\
\and
National Institute of Informatics, Tokyo 101-8430, Japan.\\
\email{takasu@nii.ac.jp}}

\maketitle

\begin{abstract}
This paper studies the unification problem with
associative, commutative, and associative-commutative functions mainly
from a viewpoint of the parameterized complexity on the number of
variables.
It is shown that
both associative and associative-commutative unification problems
are $W[1]$-hard.
A fixed-parameter algorithm and a polynomial-time algorithm are presented for
special cases of commutative unification in which
one input term is variable-free and the number of variables is bounded by 
a constant, respectively.
Related results including those on
the string and tree edit distance problems with variables are shown too.
\keywords{unification, parameterized algorithms, dynamic programming, tree edit distance.}
\end{abstract}

\section{Introduction}

\emph{Unification} plays an important role in various areas of computer science,
including theorem proving, logic programming, natural language processing,
and database query systems \cite{kapur92,knight89}.
The unification problem is, in the fundamental form,
to find a substitution for variables that will make the two given
terms identical, where terms are built up from function symbols, 
variables, and constants \cite{knight89}.
For example, two terms $f(x,y)$ and $f(g(a),f(b,x))$ become identical
by substituting $x$ and $y$ by $g(a)$ and $f(b,g(a))$, respectively.
If one input term contains no variable, the problem is called \emph{matching}.

Although unification has a long history beginning from a seminal work by
Herbrand in 1930 (e.g., see \cite{knight89}),
it is becoming important again because 
math search recently attracts researchers in information retrieval
(IR) community \cite{kamali10,kim12,nguyen12}.
For example, math search is adopted as a pilot task in an IR evaluation
conference
NTCIR \footnote{http://research.nii.ac.jp/ntcir/ntcir-10/conference.html}.
The math search is a sort of IR task to retrieve documents
containing mathematical formulas
and/or formulas themselves similar to a query.
Several systems have been developed such as
Wolfram Formula Search\footnote{http://functions.wolfram.com/formulasearch} and
formula search for Wikipedia\footnote{ http://shinh.org/wfs/}.
Since mathematical formulas are usually represented
with tree structures, structural similarity is important
to measure the similarity between formulas.
Approximate tree matching \cite{bille05} is
a key to measure the similarity.
However, when measuring the similarity between mathematical formulas,
we need to unify substitution of variables.
For example, a query $x^2 + x$ has same similarity to formulas
$y^2 + z$ and $y^2 + y$ by tree edit distance,
although these two formulas are mathematically different.
Therefore, approximate tree matching is not enough and 
combination with unification is strongly needed.

Returning to unification, many variants have been proposed
\cite{benanav87,kapur92,knight89}.
Among them, unification with commutative and associative functions
are important from the viewpoint of math search
because many functions satisfy either one or both of these two properties,
where functions satisfying $f(x,y)=f(y,x)$ and $f(x,f(y,z))=f(f(x,y),z)$
are called \emph{commutative} and \emph{associative}, respectively.

Extensive studies have been done on the computational complexity
of various unification problems.
For the fundamental one, beginning from Robinson's exponential time
algorithm \cite{robinson65},
a linear time algorithm was finally developed \cite{paterson78}.
However, all of associative, commutative, and associative-commutative
unification (and matching) problems are known to be
NP-hard \cite{benanav87,eker02,kapur92}.
Polynomial time algorithms are known only for very restricted cases
\cite{aikou05,benanav87,kapur92}.
For example, it is known that associative-commutative matching
can be done in polynomial time if
every variable occurs only once \cite{benanav87}.
From a practical viewpoint, many studies have been done on
various extensions of unification.
Furthermore, combination with approximate tree matching 
has been studied \cite{gilbert00,iranzo10}.
However, these are heuristic algorithms.

In this paper, we study associative, commutative, and associative-commutative
unification mainly from a viewpoint of parameterized complexity on
the number of variables because
the number of variables is often much smaller than the size of terms.
We show the following results along with related results:
(i) both associative and
associative-commutative matching problems are $W[1]$-hard,
(ii) both associative and associative-commutative unification 
can be done in polynomial time if every variable occurs only once,
(iii) commutative matching can be done in $O(2^{k} poly(m,n))$ time
where $k$ is the number of variables and $m,n$ are the size of
input terms,
(iv) commutative unification can be done in polynomial time
if the number of variables is bounded by a constant.
In addition, we show that 
both the string and tree edit distance problems with variables 
are $W[1]$-hard.
All algorithms presented in this paper simply decide whether two terms
are unifiable and do not output the corresponding substitutions.
However, the algorithms can be modified to output such substitutions
(when unifiable)
by using the standard \emph{traceback} technique.

\section{String Edit Distance with Variables}

Let $\Sigma$ be an alphabet and $\Gamma$ be a set of variables,
where we mainly consider $\Gamma$ that is defined as
the set of variables appearing in the input.
A string is a sequence of symbols over $\Sigma \cup \Gamma$.
Let $\theta$ be a \emph{substitution},
which is a mapping from $\Gamma$ to $\Sigma$.
For a string $s$, let $s \theta$ denote the string over $\Sigma$
obtained by replacing all occurrences of variables $x \in \Gamma$
by $\theta(x)$.
We call two strings $s_1$ and $s_2$ are \emph{unifiable}
if there exists a substitution $\theta$ such that
$s_1 \theta = s_2 \theta$.

\begin{example}
Let $s_1 = abcxbcx$, $s_2 = abydbzd$, and $s_3 = abydbzc$,
where $\Gamma=\{x,y,z\}$.
Then, $s_1$ and $s_2$ are unifiable since
$s_1 \theta = s_2 \theta = abcdbcd$ holds
by $\theta=\{x/d,y/c,z/c\}$.\footnote{$x/a$ means that $x$ is substituted by
$a$.}
However, $s_1$ and $s_3$ are not unifiable since there does not exist
$\theta$ such that $s_1 \theta = s_3 \theta$.
\end{example}

For string $s$ and integer $i$, $s[i]$ denotes the $i$-th character of $s$,
$s[i \ldots j]$ denotes $s[i] \ldots s[j]$, and
$|s|$ denotes the length (i.e., the number of characters) of $s$.
For two strings $s$ and $t$ (including the case of $s$ and/or $t$ are
single characters),
$st$ denotes the string obtained by concatenating $s$ and $t$.
An {\it edit operation on a string} $s$ over $\Sigma$ is either
a {\it deletion}, an {\it insertion}, 
or a {\it replacement} of a character of $s$ \cite{bodlaender95}.\footnote{
Usually, `substitution' is used instead of `replacement'.
However, we use `replacement' in order to discriminate
from `substitution' in unification.}
The {\it edit distance between two strings} $s_1$ and $s_2$ over $\Sigma$
is defined as
the minimum number of operations to transform $s_1$ to $s_2$,
where we consider unit cost operations here.
Let $d_S(s_1,s_2)$ denote the edit distance 
between $s_1$ and $s_2$.
From the definition, 
$
d_S(s_1,s_2) ~=~ \min_{ed:ed(s_1)=s_2} |ed| ~=~ \min_{ed:ed(s_2)=s_1} |ed|
$  
holds where $ed$ is a sequence of edit operations.
For example, $d_S(bcdfe,abgde)=3$ because $abgde$ is obtained from
$bcdfe$ by deletion of $f$, replacement of $c$ to $g$, and insertion of $a$. 
We also define the edit distance $\hat{d}_S$
between two strings over $\Sigma \cup \Gamma$
by
$
\hat{d}_S(s_1,s_2) =
\min_{ed:(\exists \theta)(ed(s_1)\theta = s_2 \theta)} |ed|.
$
This variant of edit distance is called \emph{edit distance with variables}.
Although it is well known that $d_S(s_1,s_2)$ can be computed
in polynomial time, computation of $\hat{d}_S(s_1,s_2)$ is $W[1]$-hard
as shown below.

\begin{theorem}
The edit distance problem with variables is $W[1]$-hard with respect to
the number of variables even if the number of occurrences of each variable
is bounded by 3.
\label{thm:edhard}
\end{theorem}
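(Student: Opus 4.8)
The plan is to establish $W[1]$-hardness by a parameterized reduction from \emph{Multicolored Clique} (equivalently Partitioned Clique), which is $W[1]$-hard with respect to the solution size $k$: given a graph whose vertex set is partitioned into $k$ color classes $V_1,\ldots,V_k$, decide whether there is a clique containing exactly one vertex from each class. I would produce an instance of edit distance with variables, i.e.\ two strings $s_1,s_2$ over $\Sigma\cup\Gamma$ together with a budget $d$, and argue that $\hat d_S(s_1,s_2)\le d$ holds if and only if the input graph has a multicolored clique. The number of variables in the produced instance will be bounded by a function of $k$ alone (roughly $k^2$), so that this is a genuine parameterized reduction with the number of variables as the parameter.

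Since unifiability of two strings \emph{at a fixed alignment} is easily decidable in polynomial time (one simply reads off per-position constraints and checks their consistency), the combinatorial search must be carried by the freedom of the \emph{alignment} in the edit distance, while the variables serve only to enforce global consistency. Concretely, I would represent each vertex by a distinct symbol of $\Sigma$ and, for every pair of color classes $\{i,j\}$, build a \emph{pair gadget}: a short block of $s_1$ that the cheapest edit script must align against exactly one entry of a \emph{catalog} in $s_2$ listing all edges between $V_i$ and $V_j$, the remaining catalog entries being removed by a fixed number of deletions. At the two endpoint positions of the gadget I would place the variables associated with slots $i$ and $j$; aligning the gadget to a catalog edge then forces those variables to take the two endpoints of that edge. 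Because the same slot variable occurs in every gadget touching that slot, a cheap global script exists exactly when all pair gadgets can select edges that agree on a single vertex per class, which is exactly when a multicolored clique exists. I would pad the gadgets with fresh separator symbols and tune lengths so that the \emph{honest} script, the one corresponding to a clique, costs precisely $d$.

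To meet the occurrence bound, I would not let a slot variable appear in all $k-1$ of its gadgets directly. Instead I would replace it by a chain of private copies $c_{i,1},\ldots,c_{i,k-1}$, use one copy per gadget, and tie consecutive copies together with small \emph{equality gadgets}, each implemented like a pair gadget whose catalog lists only the pairs $v_av_a$, so that it also enforces validity of the substituted symbol. With one usage occurrence plus the two chain-neighbor occurrences, each copy variable then appears at most three times, which yields the stated bound.

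The main obstacle is the \emph{soundness and exactness} of the edit-distance accounting: I must show that no ``cheating'' edit script can reach the budget $d$ without corresponding to a real clique. An adversarial script could try to misalign a gadget across catalog or separator boundaries, to pay for a few replacements instead of the intended deletions, or to break an equality chain cheaply; ruling all of these out requires a careful choice of separator symbols, block lengths, and budget so that every deviation from the honest alignment strictly increases the cost. Carrying out this counting argument while \emph{simultaneously} keeping every variable within three occurrences is where essentially all the work lies, since the fan-out chains introduce many short gadgets whose possible misalignments must all be controlled.
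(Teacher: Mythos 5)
Your proposal has a genuine gap, and it is the one you yourself flag at the end: the entire soundness argument --- showing that no edit script within budget $d$ exists unless the graph has a multicolored clique --- is deferred, and with it all the construction parameters (separator patterns, block lengths, the exact budget). In a reduction to edit distance this is not a routine verification. An adversarial script may misalign blocks across separators, trade the intended deletions/insertions for replacements, delete or replace the \emph{variable occurrences themselves} (thereby dissolving exactly the constraints they were placed to enforce), or sacrifice a single equality gadget in one of your chains to decouple two copies of a slot variable; one saved or spent unit anywhere in the script changes whether the budget is met. To close this you would need regional lower bounds whose sum is exactly $d$, plus a proof that every deviation from the honest alignment strictly exceeds its region's bound, and nothing in the proposal indicates how your gadgets achieve this. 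As written, it is a plan whose crux is missing, not a proof.

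It is instructive to compare with the paper's proof, which sidesteps this difficulty entirely. The paper reduces from Longest Common Subsequence (W[1]-hard in the number of strings $k$ and the target length $l$), setting $s^1 = x_1\cdots x_l \, \# \, x_1\cdots x_l \, \# \cdots \# \, x_1\cdots x_l$ ($k$ blocks of the same $l$ variables) and $s^2 = s_1 \# s_2 \# \cdots \# s_k$, with budget $(\sum_i |s_i|) - lk = |s^2| - |s^1|$. Since the unit-cost edit distance between two strings is always at least the difference of their lengths, with equality exactly when the shorter one (after substitution) embeds as a subsequence of the longer, meeting this budget forces an insertion-only script; no anti-cheating analysis is needed at all. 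The occurrence bound of 3 is then obtained by renaming the occurrences of each $x_i$ and appending equality-chain gadgets (similar in spirit to your chains of copies $c_{i,1},\ldots,c_{i,k-1}$), but these too are glued so that within the same tight budget they must align position-by-position at zero cost, which forces all copies to be equal. If you want to rescue your clique-based reduction, the lesson is to engineer the instance so that the budget coincides with an unconditional lower bound (such as the length difference), making every optimal script structurally rigid, rather than trying to out-count adversarial scripts gadget by gadget.
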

\begin{proof}
We present an \emph{FPT-reduction} \cite{flum06}
from the longest common subsequence problem (LCS).
LCS is, given a set of strings $\{s_1,\ldots,s_k\}$ over $\Sigma_0$ and 
an integer $l$,
to decide whether there exists a string $s$ of length $l$
that is a subsequence of each string $s_i$.
where $s$ is called a \emph{subsequence} of $s'$ if $s$ is obtained by
deletion operations from $s'$.
It is known that LCS is $W[1]$-hard
for parameters $k$ and $l$ \cite{bodlaender95}.

First we consider the case in which there is no constraint on the number
of occurrences of variables.
From an instance of LCS, we construct an instance of
edit distance with variables as follows.
Let $\Sigma=\Sigma_0 \cup \{ \# \}$ and $\Gamma=\{x_1,x_2,\ldots,x_l\}$,
where $\#$ is a symbol not appearing in $s_1,\ldots,s_k$.
We construct $s^1$ and $s^2$ by
\begin{eqnarray*}
s^1 & = & x_1 x_2 \ldots x_l \# x_1 x_2 \ldots x_l \# \cdots \#
x_1 x_2 \ldots x_l,\\
s^2 & = & s_1 \# s_2 \# \cdots \# s_k,
\end{eqnarray*}
where $x_1 x_2 \ldots x_l$ appears $k$ times in $s^1$.
Then, we can see that
there exists an LCS of length $|s^1|$ iff
$\hat{d}_S(s^1,s^2) = (\sum_i |s_i|)-lk$
(i.e., there exists $\theta$ such that $s^1 \theta$ is a subsequence of
$s^2$).
Since the number of variables appearing in this instance is $l$,
it is an FPT reduction.
The proof for the case of the bounded number of occurrences 
is given in Appendix A1.
\qed
\end{proof}

If each variable occurs only once, then the problem is equivalent
to approximate string matching with don't care characters,
which can be solved in polynomial time \cite{akutsu95}.
It should be noted that if an alphabet $\Sigma$ is fixed,
the number of possible $\theta$ is bounded by $|\Sigma|^k$,
where $k=|\Gamma|$.
Therefore, we have a fixed-parameter algorithm with parameter $k$
for a fixed alphabet.

\begin{proposition}
The edit distance problem with variables can be solved in\\
$O(|\Sigma|^k poly(m,n))$ time where $k$ is the number of variables
and $m$ and $n$ are the size of input strings.
\end{proposition}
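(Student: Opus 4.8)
The plan is to exploit the observation stated immediately before the proposition: when $\Sigma$ is treated as fixed, there are only $|\Sigma|^k$ candidate substitutions $\theta : \Gamma \to \Sigma$, since each of the $k$ variables may be mapped to one of $|\Sigma|$ symbols. First I would enumerate all of these substitutions one by one. For each fixed $\theta$, both $s_1\theta$ and $s_2\theta$ are variable-free strings over $\Sigma$, and computing them costs only $O(m+n)$ time. The key point is that once $\theta$ is fixed, the remaining task is an ordinary variable-free edit-distance computation, which the classical dynamic program solves in $O(mn)$ time. The algorithm then returns the minimum value obtained over all $|\Sigma|^k$ choices of $\theta$.

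To justify correctness I would establish that
$$\hat{d}_S(s_1,s_2) = \min_{\theta} d_S(s_1\theta, s_2\theta).$$
Starting from the definition $\hat{d}_S(s_1,s_2) = \min\{|ed| : \exists \theta,\ ed(s_1)\theta = s_2\theta\}$, the first step is a routine swap of the two minimizations, giving $\hat{d}_S(s_1,s_2) = \min_\theta \min\{|ed| : ed(s_1)\theta = s_2\theta\}$. The second step is to argue that, for each fixed $\theta$, the inner quantity equals the ordinary edit distance $d_S(s_1\theta, s_2\theta)$. This is where substitution and editing must be shown to commute: any edit sequence taking $s_1$ to a string $u$ with $u\theta = s_2\theta$ induces, position by position, an edit sequence taking $s_1\theta$ to $u\theta = s_2\theta$ of no greater length (a replacement of a variable $x$ by its own image $\theta(x)$ merely collapses to a no-op), and conversely any edit sequence on $s_1\theta$ can be lifted back to $s_1$ by performing the structurally identical deletions, insertions, and replacements, treating an untouched variable position as preserved. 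Both directions yield the same length, giving the claimed equality.

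Given the equality, the complexity bound is immediate: we run $|\Sigma|^k$ iterations, each costing $O(m+n)$ for the substitution plus $O(mn)$ for the standard dynamic program, for a total of $O(|\Sigma|^k \cdot mn) = O(|\Sigma|^k\, poly(m,n))$. The only genuinely non-mechanical step is the commutation argument of the second paragraph; everything else is brute-force enumeration layered on top of the textbook edit-distance algorithm. I expect the main care to be needed in the lifting direction, namely ensuring that replacing an edited variable position by a symbol of $\Sigma$ does not conflict with the images of the remaining unedited variable occurrences. Since $\theta$ is fixed throughout each iteration and the minimization ranges over all $\theta$ simultaneously, this consistency is automatic, so I anticipate no real obstacle beyond stating the argument cleanly.
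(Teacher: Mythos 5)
Your proposal is correct and matches the paper's approach: the paper gives no explicit proof beyond the preceding remark that only $|\Sigma|^k$ substitutions exist, implicitly intending exactly your algorithm of enumerating all $\theta$ and running the standard $O(mn)$ edit-distance computation on $s_1\theta$ and $s_2\theta$. Your additional commutation argument showing $\hat{d}_S(s_1,s_2)=\min_\theta d_S(s_1\theta,s_2\theta)$ is a sound filling-in of the detail the paper leaves unstated.
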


\section{Unification}

In order to define unification,
we regard $\Sigma$ as a set of function symbols, where 
arity (i.e., the number of arguments) is associated with each symbol.
We call a function symbol with arity 0 a \emph{constant}.
We define a \emph{term} as follows:
\begin{itemize}
\item a constant is a term,
\item a variable is a term,
\item if $t_1,\cdots,t_d$ are terms and
$f$ is a function symbol with arity $d>0$,
$f(t_1,\ldots,t_d)$ is a term.
\end{itemize}
We identify each term $t$ with a rooted ordered tree where
each node corresponds to a function symbol and each leaf
corresponds to a constant.
For a term $t$, $N(t)$ denotes the set of nodes in a tree $t$,
$r(t)$ denotes the root of $t$,
and $\gamma(t)$ denotes the function symbol of $r(t)$.
For a node $u \in N(t)$, $t_u$ denotes a subterm (i.e., subtree)
of $t$ rooted at $u$.
The \emph{size} of $t$ is defined as $|N(t)|$.

Let ${\cal T}$ be a set of terms over $\Sigma$ and $\Gamma$.
Then, a \emph{substitution} $\theta$ is defined as 
a (partial) mapping from $\Gamma$ to ${\cal T}$,
where $t$ must not contain a variable $x$ if $x/t \in \theta$.
For a term $t$ and a substitution $\theta$,
$t \theta$ is the term obtained by simultaneously replacing
variables according to $\theta$.
We say that terms $t_1$ and $t_2$ are \emph{unifiable}
if there exists $\theta$ such that $t_1 \theta = t_2 \theta$.
Such $\theta$ is called a \emph{unifier}.
In this paper, the unification problem is to decide
whether two given terms are unifiable and output a unifier
if unifiable.\footnote{In the standard case, it is required to
output the most general unifier (mgu). However, in most variants
considered in this paper, there does not exist mgu.}
It is well-known that the unification problem can be solved in
linear time \cite{paterson78}.
A special case of the unification problem in which $t_2$ is
variable-free is called a \emph{matching} problem.
If every variable (resp., a variable $x$) occurs in a term $t$ only once,
the term (resp., the variable) is called a \emph{DO-term} (resp.,
\emph{DO-variable}), where DO means distinct occurrence(s) \cite{benanav87}.
Unless otherwise stated, $m$ and $n$ denote the size of two input terms 
$t_1$ and $t_2$.

\begin{example}
Let $t_1 = f(g(a,b,a),f(x,x))$,
$t_2 = f(g(y,b,y),z)$,
$t_3 = f(g(a,b,u),f(v,u))$,
$t_4 = f(f(a,b),f(a,a))$,
$t_5 = f(g(a,b,a),f(w,f(w,w)))$,
where $\Gamma=\{x,y,z,u,v,w\}$.
$t_1$ and $t_2$ are unifiable since 
$t_1 \theta_1 = t_2 \theta_1 = f(g(a,b,a),f(x,x))$ holds for
$\theta_1 = \{y/a,z/f(x,x)\}$.
$t_1$ and $t_3$, and $t_2$ and $t_3$ are also unifiable since 
$t_1 \theta_2 = t_3 \theta_2 = f(g(a,b,a),f(a,a))$ and
$t_2 \theta_3 = t_3 \theta_3 = f(g(a,b,a),f(v,a))$ hold for
$\theta_2=\{x/a,u/a,v/a\}$ and
$\theta_3=\{y/a,u/a,z/f(v,a)\}$, respectively.
$t_4$ is not unifiable to $t_1$, $t_2$, or $t_3$.
$t_5$ is unifiable to $t_2$, but is not unifiable to
$t_1$ (or $t_3$) because it is impossible to simultaneously satisfy
$x=w$ and $x=f(w,w)$.\footnote{This kind of matching can be avoided by
\emph{occur check}.}
\end{example}

As in the case of string edit distance,
we can combine \emph{tree edit distance} \cite{bille05}
with unification.
Let $d_T(t_1,t_2)$ denote the tree edit distance where
the distance can be for both ordered and unordered trees.
Then, we define the tree edit distance $\hat{d}_T$
between two trees (i.e., two terms) over $\Sigma \cup \Gamma$
by
\[
\hat{d}_T(t_1,t_2) = \min_{ed:(\exists \theta)(ed(t_1)\theta = t_2 \theta)} |ed|.
\]
By combining the proofs of Thm.~\ref{thm:edhard} and
Thm.~\ref{thm:asohard}, we have:

\begin{theorem}
The tree edit distance problem
with variables is $W[1]$-hard for both ordered and unordered trees
with respect to
the number of variables for a fixed alphabet even if
the number of occurrences of each variable is bounded by 3.
\label{thm:tree-ed-hard}
\end{theorem}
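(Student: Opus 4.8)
The plan is to extend the FPT-reduction from the longest common subsequence problem (LCS) used in the proof of Theorem~\ref{thm:edhard}, while borrowing from the proof of Theorem~\ref{thm:asohard} the gadgets that encode an arbitrary alphabet over a fixed one and that bound the number of occurrences of each variable. Given an LCS instance $\{s_1,\dots,s_k\}$ over $\Sigma_0$ with target length $l$, I would first reuse the two strings
\[
s^1 = x_1 x_2 \cdots x_l \# \cdots \# x_1 x_2 \cdots x_l, \qquad
s^2 = s_1 \# s_2 \# \cdots \# s_k,
\]
and then realize them as trees $t^1$ and $t^2$ whose tree edit distance reproduces the string edit distance $\hat{d}_S(s^1,s^2)$. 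Concretely I would use a caterpillar (a single spine whose nodes carry the characters in left-to-right order), so that leaf deletions and insertions in the tree correspond exactly to character deletions and insertions in the string, and the count $(\sum_i |s_i|)-lk$ from Theorem~\ref{thm:edhard} is preserved.

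Second, to obtain a \emph{fixed} alphabet I would replace every symbol of $\Sigma_0 \cup \{\#\}$ by a distinct rigid gadget over a constant-size alphabet --- for instance a short subtree whose shape (depth and branching) uniquely identifies the original character --- and likewise encode each variable by a placeholder gadget, so that a substitution instantiates a placeholder with a character gadget exactly as in the string setting where $\theta : \Gamma \to \Sigma$. To meet the ``at most three occurrences'' requirement I would import the occurrence-bounding construction used for Theorem~\ref{thm:edhard}: split each variable into several fresh copies and tie them together with small equality gadgets so that any unifier must assign all copies the same character, while no variable appears more than three times.

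The correctness argument then has two directions: an LCS of length $l$ yields a substitution making $t^1\theta$ embed into $t^2$ with the prescribed number of edits, and conversely any edit-plus-substitution achieving the target value must align gadgets to gadgets and instantiate each variable with a single-character gadget, from which an LCS of length $l$ can be read off. The hard part will be this ``no cheating'' step: I must argue that an optimal combined solution respects gadget boundaries and never profits from substituting a variable by a larger tree or from editing inside a gadget. I would enforce this by making the gadgets rigid enough (unique root labels and uniform depth) that any cross-gadget or intra-gadget alignment is strictly more expensive than the intended character-level cost, so that the optimum factors through the string reduction of Theorem~\ref{thm:edhard}. Finally, for the \emph{unordered} case the caterpillar alone does not fix positions, so I would additionally attach to each character gadget a position marker (a unary chain or binary encoding of its index) so that permuting children cannot lower the cost; the ordered case needs no such marker, and the same accounting yields $W[1]$-hardness for both tree variants.
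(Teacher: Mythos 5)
Your overall skeleton (the LCS reduction of Theorem~\ref{thm:edhard} realized on comb-shaped trees, the fixed-alphabet encoding borrowed from Theorem~\ref{thm:asohard}, and the occurrence-bounding gadgets of Appendix~A1) is the combination the paper intends, but the two mechanisms you invent for the genuinely tree-specific difficulties both fail. The central gap is your ``no cheating'' step. In the tree problem a substitution maps a variable to an \emph{arbitrary term}, not to a single character, and gadget rigidity does not control this. Concretely, whenever $t^1$ contains a variable and $t^2$ is variable-free, $\hat{d}_T(t^1,t^2)\le |N(t^1)|-1$: delete every node of $t^1$ except one variable leaf $x$ and substitute $x/t^2$. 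Your target value is (up to the constant gadget size) $(\sum_i |s_i|)-lk$, whereas $|N(t^1)|=O(lk)$ times the gadget size is independent of $\sum_i|s_i|$; since LCS is only hard on instances whose strings are much longer than $lk$, the cheat beats your budget on exactly the instances that matter, whether or not an LCS of length $l$ exists, so the claimed equivalence is false. Unique root labels and uniform depth are irrelevant here because the cheat never aligns any gadget: it only deletes cheap nodes of $t^1$. What is needed is a device making every substitution by more than one character gadget cost more than the budget --- this is the role of the interleaved, pairwise-distinct $\&$ separators in the proof of Theorem~\ref{thm:asohard}, which your proposal never mentions --- and, in the edit-distance setting, additionally something (e.g.\ a long common run of a fresh constant appended to both strings and hanging at the bottom of the combs) that keeps the budget strictly below the cost of any such swallowing substitution or of deleting the padding. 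Without either ingredient the soundness direction collapses.

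The unordered fix is also wrong as stated. If each character gadget of $t^2$ carries a marker encoding its index, completeness fails: a common subsequence occupies \emph{different} positions in the different strings $s_i$, while a variable $x_j$ must be replaced by one fixed term, so it cannot coincide with gadgets carrying different markers, and the intended count is not attained even when an LCS of length $l$ exists (any repair costs extra, position-dependent edits). This is precisely why the paper's own device for encoding positions in an order-free setting (Theorem~\ref{thm:ac-hard}, Appendix~A5) encodes a position by the \emph{size} of a subterm together with slack variables, so that only monotonicity of positions, never equality, is enforced; your marker gadget has no slack mechanism. In fact no marker is needed: once the budget equals the size difference, every node of $t^1\theta$ must be matched to an identically labeled node, spine nodes then map to spine nodes, and ancestor preservation forces the hanging character gadgets to be matched in spine order, so the ordered and unordered distances coincide on the intended instances. (A smaller point: variables must be leaves of a term, so the characters cannot label the spine nodes themselves, as in your caterpillar; they must hang off the spine as in the combs of Theorem~\ref{thm:asohard}.)
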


We also have the following theorem as in several matching problems
\cite{benanav87},
where the proof is given in Appendix A2.\footnote{
The original (i.e., variable-free) edit distance problem is known to NP-hard
for unordered trees \cite{bille05}.}

\begin{theorem}
The ordered tree edit distance problem with variables
can be solved in polynomial time for DO-terms.
\label{thm:tree-ed-do}
\end{theorem}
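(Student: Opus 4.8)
The plan is to adapt the classical dynamic-programming algorithm for ordered tree edit distance \cite{bille05} to the presence of variables, exploiting the fact that in a DO-term each variable occurs exactly once. Recall that the standard algorithm computes, for every pair of ordered subforests $F$ of $t_1$ and $G$ of $t_2$, a forest edit distance $\delta(F,G)$ by a recursion on the rightmost roots $v$ of $F$ and $w$ of $G$, with the three options of deleting $v$, inserting $w$, or relabeling $v$ against $w$ and recursing separately on the forest of children of $v$ against that of $w$ and on the two remaining forests. Only polynomially many distinct subforests arise, which is the source of the polynomial running time.

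First I would augment this recursion with cases for variable leaves. The key observation is that substitution is applied \emph{after} the edits, so it incurs no edit cost: a variable leaf $x$ occurring in $t_1$ may be mapped by $\theta$ to the entire tree $T_w$ rooted at the rightmost root $w$ of $G$, which adds the zero-cost transition $\delta(F-x,\,G-T_w)$ whenever $v=x$ is a variable. Symmetrically, a variable leaf in $G$ may absorb an entire rightmost tree of $F$ for free, and a variable in $t_1$ matched against a variable in $t_2$ is free as well, since we may equate their substitutions. All ordinary delete, insert, and relabel transitions remain available, and the recurrence takes the minimum over all of them.

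The heart of the correctness proof is to show that this local, zero-cost handling of variables yields a global optimum, and this is exactly where the DO hypothesis is essential. In general, several occurrences of one variable force all of them to be substituted by the same term, a global consistency constraint that couples distant parts of the two trees and destroys the optimal-substructure property on which the forest recursion relies. Under the DO assumption each variable occurs only once, so the term assigned to $\theta(x)$ is constrained solely by that single position; the substitutions of distinct variables are therefore mutually independent and may be fixed independently inside the recursion. I would make this precise with an exchange argument: from any optimal $(ed,\theta)$ witness one reads off, for each matched variable, the subtree it is mapped to, and checks that the augmented recursion explores exactly these alignments, while conversely every branch of the recursion yields a feasible $(ed,\theta)$ pair. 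Independence is precisely what permits pasting together the two independent recursive calls of the relabel case with no consistency check on $\theta$.

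I expect the principal difficulty to lie in this correctness bookkeeping rather than in the algorithm design. One must verify that letting a variable absorb an entire subtree for free never conflicts with the edit operations chosen elsewhere, and must treat cleanly a variable that occurs in both $t_1$ and $t_2$, where its single occurrence on each side still leaves $\theta$ a locally determined choice. Once the exchange argument establishing optimal substructure is in place, the polynomial time bound is immediate: the variable cases add only $O(1)$ transitions per state and do not enlarge the set of reachable subforest pairs beyond that of the variable-free algorithm, so the overall complexity matches that of ordinary ordered tree edit distance.
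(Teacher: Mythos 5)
Your proposal matches the paper's proof in essence: the paper likewise augments the standard rightmost-root forest recursion for ordered tree edit distance with a single zero-cost case, namely $D[F_1-T_1,F_2-T_2]$ whenever the rightmost tree $T_1$ or $T_2$ is a single variable node, which is exactly your ``variable absorbs the opposing rightmost tree for free'' transition, with the DO hypothesis guaranteeing that substitutions need no cross-consistency check. The complexity claim ($O(m^2n^2)$, unchanged from the variable-free algorithm) also coincides.
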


\section{Associative Unification}

A function $f$ is called \emph{associative} if
$f(x,f(y,z))=f(f(x,y),z)$ always holds.
Associative unification is a variant of unification
in which some functions are associative.
In this section, we assume that all functions are associative
although all the results are valid even if usual functions are included.

It is shown that associative matching is NP-hard \cite{benanav87}.
However, the proof in \cite{benanav87} does not work to show
the parameterized hardness.

\begin{theorem}
Associative matching is $W[1]$-hard with respect to the number of variables
even for a fixed $\Sigma$.
\label{thm:asohard}
\end{theorem}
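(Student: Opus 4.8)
The plan is to give an FPT-reduction from the longest common subsequence problem (LCS), which the excerpt already invokes as $W[1]$-hard with respect to $(k,l)$, mirroring the reduction used for Theorem~\ref{thm:edhard} but now producing an exact-matching instance rather than an edit-distance one. The conceptual starting point is that a single binary associative function symbol $f$ behaves like string concatenation: under associativity every ground term over $f$ and the constants flattens to a nonempty string of constants, so associative matching with this signature is exactly pattern matching with variables, where each variable ranges over nonempty constant-strings. I would therefore build a pattern term $t_1$ and a ground term $t_2$ whose flattened strings encode the $k$ input strings, and argue unifiability iff the LCS instance is positive.

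Concretely, I would use $l$ shared \emph{selection} variables $x_1,\dots,x_l$ to represent the sought common subsequence, and for each input string $s_i$ a private block of $O(l)$ \emph{gap} variables $g_{i,0},\dots,g_{i,l}$. The $i$-th block of the pattern is $g_{i,0}\,x_1\,g_{i,1}\,x_2\cdots x_l\,g_{i,l}$ and is matched against a padded copy $\hat s_i$ of $s_i$ in which a delimiter $\$\notin\Sigma_0$ is inserted between consecutive characters and at both ends; the padding guarantees that every gap variable can take a nonempty value even when the selected characters are adjacent in $s_i$. The $k$ blocks are concatenated with a fresh separator $\#$ into $t_1$, and the $\hat s_i$ likewise into $t_2$, so a single unifier must make a \emph{consistent} choice of $x_1,\dots,x_l$ simultaneously threadable through all $k$ strings, while the gap variables, being private to each block, independently absorb the remaining material. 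Since the resulting instance has $l+k(l+1)=O(kl)$ variables, a function of the parameters only, this is an FPT-reduction.

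For the ``iff'' I would argue as follows: a length-$l$ common subsequence of $s_1,\dots,s_k$ yields a unifier by setting each $x_j$ to the corresponding single character and each $g_{i,j}$ to the intervening (nonempty, delimiter-containing) gap; conversely, a unifier exhibits $x_1\cdots x_l$ as a common subsequence of the $\hat s_i$, from which---once we know the $x_j$ carry no delimiter---stripping the $\$$'s recovers a common subsequence of the original $s_i$ of length $\ge l$, hence one of length exactly $l$. I expect the main obstacle to lie precisely in this last clause: forcing each selection variable to denote a single symbol of $\Sigma_0$, or at least to contain no delimiter, so as to rule out the cheating substitution that assigns every $x_j$ to a block of delimiters (which would make matching trivially feasible regardless of the LCS answer). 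The remedy I plan is a delimiter-gadget that flanks each $x_j$ in the pattern and pins it, via rigid alignment of the literal delimiters shared by pattern and text, to the content lying strictly between two consecutive text delimiters; the delicate part of the proof is verifying that no alternative factorization of $\hat s_i$ can defeat this pinning, i.e.\ that the only admissible values of the $x_j$ are genuine single characters.

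Finally, to obtain hardness for a \emph{fixed} $\Sigma$ I would compose the above with a binary block-encoding of $\Sigma_0$ over $\{0,1\}$, wrapping each length-$b$ code block ($b=\lceil\log|\Sigma_0|\rceil$) in alignment markers so that the subsequence relation is preserved under the exact-matching semantics; the alphabet of the constructed instance is then the fixed set $\{0,1,\#,\$\}$ together with the marker symbols and the single function symbol $f$, independent of the LCS instance, while the variable count remains $O(kl)$. The one subtlety to check here is that block-encoding does not create spurious cross-block subsequences, which is again controlled by the markers and is of a piece with the single-symbol-forcing argument above.
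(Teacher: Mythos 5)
Your high-level plan coincides with the paper's proof of Theorem~\ref{thm:asohard}: an FPT-reduction from LCS in which a binary associative symbol $f$ plays the role of concatenation, shared variables $x_1,\ldots,x_l$ select the common subsequence, per-string gap variables absorb the rest, the text strings are padded with delimiters, and a final binary encoding fixes the alphabet. The genuine gap is in your delimiter gadget, exactly at the place you flagged as delicate. With a single shared delimiter symbol, flanking each $x_j$ by literal $\$$'s does not repair the reduction; it breaks the \emph{forward} direction. In the flattened pattern $g_{i,0}\,\$\,x_1\,\$\,g_{i,1}\,\$\,x_2\cdots$, every interior gap variable $g_{i,j}$ is itself bounded by two literal $\$$'s, and in the text $\$\,c_1\,\$\,c_2\,\$\cdots\$\,c_n\,\$$ the (necessarily nonempty) content strictly between any two delimiters contains at least one genuine character of $s_i$. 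Hence a match forces each $s_i$ to split into $2l+1$ pairwise disjoint \emph{nonempty} character blocks: consecutive selected characters must be separated by an unselected character in \emph{every} input string, and unselected characters must exist before the first and after the last selection. So, for example, $s_1=\cdots=s_k=ab$ with $l=2$ has an LCS of length $l$, yet the constructed terms are not unifiable; yes-instances are not preserved. Moreover, even where the gadget applies, the pinning claim you want to verify is simply false: nothing forces the two flanking $\$$'s to align with \emph{consecutive} text delimiters, so $x_j$ may take a value such as $c\,\$\,c'$. (That weaker defect alone would be harmless for the converse, since stripping delimiters still yields a common subsequence of length at least $l$; it is the forward direction that dies.)

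The paper resolves this tension with a different idea, which is what your proposal is missing: keep the pattern free of literal delimiters (each block is just $y_{i,1}x_1y_{i,2}x_2\cdots y_{i,l}x_ly_{i,l+1}$), and instead make \emph{every occurrence} of the padding symbol a \emph{distinct constant} that cannot match any symbol but can be matched by a variable. Then the private gap variables can still swallow the padding, so adjacent selections and boundary selections cause no problem; whereas a shared $x_j$ can never contain a padding occurrence, because its single value would have to occur in all $k$ blocks while each padding constant exists at only one position. Since any padding-free nonempty substring of the padded text is a single character, the $x_j$ are forced to single characters with no extra gadget at all. The fixed-alphabet step is then done at the term level (each constant is re-encoded as a distinct term built from a function symbol $h$ over $\{0,1\}$), so an encoded character is a single argument of the flattened $f$-term and can never be partially matched --- which also sidesteps the cross-block-subsequence issues your string-level block encoding with markers would still have to fight.
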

\begin{proof}
As in the proof of Thm.~\ref{thm:edhard},
we use a reduction from LCS (see also Fig.~\ref{fig:asohard}).

First we consider an infinite alphabet.
Let $(\{s_1,\ldots,s_k\},l)$ be an instance of LCS.
For each $i=1,\ldots,k$, we create a term $u^i$ by
\[
u^i = 
f(y_{i,1},f(x_1,f(y_{i,2},f(x_2,\cdots f(y_{i,l},f(x_l,f(y_{i,l+1},g(\#,\#)))
\cdots )))),
\]
where $\#$ is a character not appearing in $s_1,\ldots,s_k$.
We create a term $t_1$ by concatenating $u^1,\ldots,u^k$, which can be done
by replacing the last occurrence of $\#$ of each
$u^i$ by $u^{i+1}$ for $i=1,\ldots,k-1$.
Then, we transform each $s_i$ into a string $s_i'$ of length
$1 + 2 \cdot |s_i|$ by inserting a special character $\&$ 
in front of each character in $s_i$, and appending $\&$
at the end of $s_i$,
where each $\&$ is considered as a distinct constant
(i.e., $\&$ cannot match any symbol, but can match any variable).
We represent each $s_i'$ by a term $t^i$ given as
\[
t^i =
f(s_i'[1],f(s_i'[2],f(s_i'[3],f(\cdots,f(s_i'[1+2\cdot |s_i'|],g(\#,\#))
\cdots)))).
\]
We create a term $t_2$ by concatenating $t^1,\ldots,t^k$.

\begin{figure}[ht]
\begin{center}
\includegraphics[width=7cm]{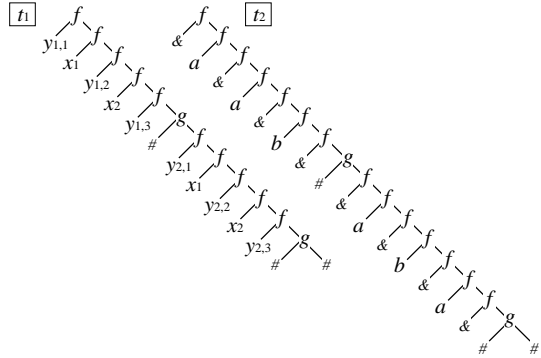}
\caption{Example of a reduction for the case of $s_1=aab$, $s_2=aba$,
and $l=2$ in the proof of Thm.~\ref{thm:asohard}.}
\label{fig:asohard}
\end{center}
\end{figure}

Then, we can see that $t_1$ and $t_2$ are unifiable
iff there exists an LCS of length $l$.
Since the number of variables in $t_1$ is $(l+1)k+l=lk+l+k$,
it is an FPT-reduction and thus the problem is $W[1]$-hard.

Finally, we represent each constant by a distinct term using
a function symbol $h$ and binary-encoding 
(e.g., 10-th symbol (among 16 symbols) can be represented as
$h(1,h(0,h(1,0)))$ ).
\qed
\end{proof}

Next, we consider associative unification for DO-terms,
where it has some similarity with DO-associative-commutative matching
\cite{benanav87}.
We begin with the simplest case in which each term does not
contain a variable.

\begin{proposition}
Associative unification can be done in polynomial time if
there exists no variable.
\end{proposition}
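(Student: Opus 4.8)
The plan is to note that, because neither input term contains a variable, the only candidate unifier is the empty substitution; hence $t_1$ and $t_2$ are unifiable if and only if they are equal modulo the associativity axioms, and in that case the empty substitution is the unifier to output. Writing $s \equiv_A t$ for this relation, the task reduces to deciding the word problem $t_1 \equiv_A t_2$ in polynomial time. My approach is to compute a canonical \emph{flattened} form for each term and then test the two forms for identity.

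To define the flattened form I would traverse each term and, at every node whose symbol $f$ is associative (hence binary), collapse the nested applications of $f$: the maximal subterms lying under that $f$-node but not themselves rooted at $f$ form, read left to right, a well-defined ordered operand list $L_f(t)$, and I record $f$ together with this list after recursively flattening each operand. At a node whose symbol is a free (non-associative) function of arity $d$ I keep the $d$ children in their given order, recursively flattened, and a constant flattens to itself. Thus the flattened form is a variadic tree in which each associative node carries the ordered list of its maximal non-$f$ operands. Let $\widehat{t}$ denote the particular right-combed term obtained by reconstructing $t$ from this form.

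The key lemma to establish is that $s \equiv_A t$ holds if and only if $s$ and $t$ have the same flattened form. For the backward direction I would observe that $\widehat{t}$ is reachable from $t$ by applications of the associativity axiom alone, so $t \equiv_A \widehat{t}$; if two terms share their flattened form then $\widehat{s}$ and $\widehat{t}$ are literally the same term, whence $s \equiv_A \widehat{s} = \widehat{t} \equiv_A t$. For the forward direction, since $\equiv_A$ is the least congruence containing every instance of the axiom, it suffices to check that the flattened form is invariant under a single application of associativity inside an arbitrary context; by induction on the context this reduces to the base identity $L_f(f(f(a,b),c)) = L_f(a)\,L_f(b)\,L_f(c) = L_f(f(a,f(b,c)))$ together with the fact that flattening commutes with descent into operands. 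The main obstacle is precisely this completeness half --- verifying that no sequence of associativity rewrites can alter the operand sequence --- and the care needed to flatten each associative symbol independently (breaking the operand list at every node whose symbol differs from $f$) while still recursing through free-function and constant arguments. Once this invariant is in place, computing $\widehat{t_1}$ and $\widehat{t_2}$ takes a single post-order traversal that concatenates child operand lists at each associative node, costing $O(m+n)$ time, and comparing the two canonical forms for identity is a further linear-time simultaneous traversal; the whole decision procedure therefore runs in polynomial (indeed linear) time.
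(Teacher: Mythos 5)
Your proposal is correct and follows essentially the same route as the paper: transform both variable-free terms into a canonical flattened form (collapsing nested applications of each associative symbol into a single variadic node with an ordered operand list) and then test the two canonical forms for identity in linear time. The only difference is one of rigor --- you spell out the soundness/completeness lemma for the flattening (invariance of the operand list under associativity rewrites in context), which the paper's brief proof takes for granted.
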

\begin{proof}
We transform each input term into its \emph{canonical form} in which 
consecutive and same function symbols are simplified into one symbol.
For example, both $f(f(a,b),f(g(c,d),e))$ and $f(a,f(b,f(g(c,d),e)))$
are transformed into $f(a,b,g(c,d),e)$.
Since $t_1 = t_c$ and $t_2 = t_c$ means $t_1 = t_2$,
it is enough to test the isomorphism of the canonical forms in order to
examine $t_1 = t_2$.
Since the rooted ordered tree isomorphism between the resulting
canonical forms can be trivially tested in linear time,
we have the proposition.
\qed
\end{proof}

In order to treat DO-terms,
we transform terms $t_1$ and $t_2$ into their canonical forms $t^1$ and $t^2$.
Then, we apply the following procedure to $t^1$ and $t^2$
(see also Fig.~\ref{fig:asopoly}),
where it returns `true' iff $t^1$ and $t^2$ are unifiable, and
$D[u,v]=1$ iff $(t^1)_u$ and $(t^2)_v$ are unifiable.
It is to be noted that step (\#) can be done in constant time because
$(t^1)_u$ and $(t^2)_v$ are unifiable there iff
these are the same constant or one of $t_u$ and $t_v$ is a variable.

\begin{rm}
\begin{tabbing}
\quad \= \quad \= \quad \= \quad \= \quad \= \quad \= \quad \= \kill
\> Procedure $AssocMatchDO(t^1,t^2)$\\
\> \> {\bf for all} $u \in N(t^1)$ {\bf do}~~~~~~~~~~~~~~~~~~~~~~~~~/* in post-order */ \\
\> \> \> {\bf for all} $v \in N(t^2)$ {\bf do}~~~~~~~~~~~~~~~~~~~~~~/* in post-order */ \\
\> \> \> \> {\bf if} $(t^1)_u$ or $(t^2)_v$ is a constant {\bf then}\\
\> \> \> \> \> {\bf if} $(t^1)_u$ and $(t^2)_v$ are unifiable ~~~~~~~~~~~~~~~~~~~~~~-(\#) \\
\> \> \> \> \> {\bf then} $D[u,v] \leftarrow 1$ {\bf else} 
$D[u,v] \leftarrow 0$;\\
\> \> \> \> {\bf else if} $(t^1)_u$ or $(t^2)_v$ is a variable {\bf then}\\
\> \> \> \> \> $D[u,v] \leftarrow 1$;\\
\> \> \> \> {\bf else}~~~/* $(t^1)_u = f_1((t^1)_{u_1},\ldots,(t^1)_{u_p})$,
$(t^2)_v = f_2((t^2)_{v_1},\ldots,(t^2)_{v_q})$ */\\
\> \> \> \> \> {\bf if} $f_1 = f_2$ and 
$\langle (t^1)_{u_1},\ldots,(t^1)_{u_p} \rangle$ can match $\langle (t^2)_{v_1},\ldots,(t^2)_{v_q} \rangle$  \\
\> \> \> \> \> {\bf then} $D[u,v] \leftarrow 1$ {\bf else} $D[u,v] \leftarrow 0$; \\
\> \> {\bf if} $D[r(t_1),r(t_2)]=1$
{\bf then} {\bf return} true {\bf else} {\bf return} false.
\end{tabbing}
\end{rm}

Match of $\langle (t^1)_{u_1},\ldots,(t^1)_{u_p} \rangle$ 
and $\langle (t^2)_{v_1},\ldots,(t^2)_{v_q} \rangle$ can be tested in
polynomial time by regarding each of these two sequences as a string
and applying string matching with variable length don't cares
\cite{akutsu96} with setting the difference to be 0 and allowing don't
care characters appear in both strings,
where
$(t^1)_{u_i}$ (resp., $(t^2)_{v_j}$) is regarded as a don't care symbol
that can match any substring of length at least 1 if it is a variable,
otherwise $(t^1)_{u_i}$ can match $(t^2)_{v_j}$ iff $D[u_i,v_j]=1$
(see Appendix A3 for the details).
Since for-loops are repeated $O(mn)$ times and
string matching with variable length don't cares can be done in polynomial time,
we have:

\begin{theorem}
Associative unification for DO-terms can be done in polynomial time.
\label{thm:asspoly}
\end{theorem}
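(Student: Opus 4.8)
The plan is to establish two things about the procedure $AssocMatchDO$ described above: its correctness, phrased as the invariant that $D[u,v]=1$ holds exactly when $(t^1)_u$ and $(t^2)_v$ are unifiable, and that it terminates in polynomial time. The whole argument is an induction that follows the post-order in which the table is filled, so that when a pair $(u,v)$ is processed, every entry $D[u_i,v_j]$ for descendant pairs is already correctly set. I would first record that passing to canonical forms is harmless: flattening nested occurrences of the same associative symbol is a linear-time operation that preserves unifiability, and since variables are not function symbols they survive flattening untouched; hence it suffices to work with $t^1,t^2$, reusing the reasoning of the preceding variable-free proposition. The structural payoff of canonicalization is that no child of an $f$-node is itself headed by $f$, which is precisely what lets me treat the matching of the child sequences as a string problem.

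Next comes the inductive core, namely verifying the recurrence. The base cases (one side a constant, one side a variable) are immediate, the variable case using that a lone variable unifies with any term. For the main case, where $(t^1)_u=f_1(\ldots)$ and $(t^2)_v=f_2(\ldots)$, I would prove that the two are unifiable iff $f_1=f_2$ and the child sequences match in the variable-length-don't-care sense, where a non-variable child $(t^1)_{u_i}$ may align with a single $(t^2)_{v_j}$ only when $D[u_i,v_j]=1$, while a variable child acts as a don't-care absorbing a contiguous block of at least one subterm. Associativity is exactly what legitimizes such absorption (the variable is set to $f_1$ applied to the absorbed block), the length-$\geq 1$ bound reflects that terms are non-empty, and don't-cares must be permitted on \emph{both} sides because this is unification rather than matching. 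For the forward direction I would read off the alignment induced by a given unifier: a non-variable child has a head symbol unaffected by substitution, so it contributes a single subterm that either equals a non-variable element on the other side (forcing unifiability of that pair, hence $D=1$) or is absorbed by a variable there. The backward direction assembles a global unifier from the per-block choices supplied by the matching.

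The step I expect to be the main obstacle is justifying exactly this assembly, that is, that the locally and independently obtained unifiers for the matched pairs and absorbed blocks merge into one global substitution with no consistency conflicts. This is where the DO hypothesis is indispensable: since every variable occurs only once across $t^1$ and $t^2$, the variable sets appearing in distinct children, and in distinct absorbed blocks, are pairwise disjoint, so the substitutions built for them never overlap and their union is a well-defined unifier. Without DO one would instead have to demand that, say, two occurrences of the same variable absorb equal blocks, reintroducing global consistency constraints; this is precisely the difficulty that the $W[1]$-hardness of Theorem~\ref{thm:asohard} reflects, so I would emphasize DO as the hinge of the correctness proof rather than treat it as a technicality.

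Finally, the complexity bound is routine and I would dispatch it quickly: the table has $O(mn)$ entries, and filling each entry invokes one variable-length-don't-care string match \cite{akutsu96}, which runs in polynomial time even with don't-cares allowed on both strings and the difference fixed to $0$; therefore the procedure is polynomial overall. Correctness of the returned answer then follows by applying the established invariant to the pair of roots.
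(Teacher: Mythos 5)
Your proposal follows essentially the same route as the paper: canonicalize both terms so that no child of an $f$-node is itself headed by $f$, fill the table $D[u,v]$ bottom-up in post-order, and reduce each internal case to string matching with variable-length don't cares on both sides (with a variable absorbing a contiguous block of length at least one, and non-variable pairs constrained by $D[u_i,v_j]=1$), giving $O(mn)$ table entries each computable in polynomial time. In fact you supply more of the correctness argument than the paper does --- in particular the explicit use of the DO hypothesis to merge the per-block local unifiers into a global one, which the paper leaves implicit --- so the proposal is correct and matches the paper's approach.
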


\begin{figure}[ht]
\begin{center}
\includegraphics[width=11cm]{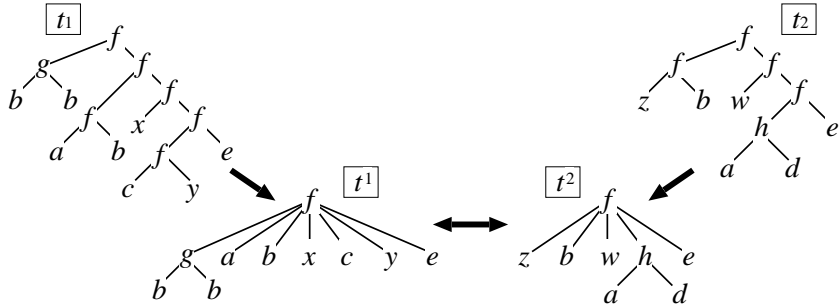}
\caption{Illustration of associative unification for DO-terms $t_1$ and $t_2$.
$t_1$ and $t_2$ are transformed into $t^1$ and $t^2$,
which are then unified by
$\theta=\{y/h(a,d),z/f(g(b,b),a),w/f(x,c)\}$.}
\label{fig:asopoly}
\end{center}
\end{figure}

We can also consider another variant in which
$t_1$ can contain a constant number of non-DO variables
but $t_2$ cannot contain any variable.
Let $\Gamma_1$ be the set of non-DO variables in $t_1$.
We examine all possible mappings from $\Gamma_1$ to the set of
consecutive children of each node in the canonical form $t^2$ of $t_2$.
If we apply such a mapping, all occurrences of variables in $\Gamma_1$
are replaced by terms without variables.
Then, we can apply $AssocMatchDO(t^1,t^2)$ to the resulting terms.
Since the number of mappings is clearly $O(n^{2 |\Gamma_1|})$,
we have:

\begin{corollary}
Associative matching can be done in polynomial time 
if $t_1$ contains a constant number of non-DO terms and
any number of DO-terms.
\label{cor:asspoly}
\end{corollary}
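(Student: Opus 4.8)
The plan is to prove Corollary~\ref{cor:asspoly} by combining an exhaustive search over substitutions for the constantly many non-DO variables with the polynomial-time algorithm of Theorem~\ref{thm:asspoly}. The central observation is that once every non-DO variable has been replaced by a concrete variable-free term, the remaining instance contains only DO-variables, and hence $AssocMatchDO$ applies directly. So the entire task reduces to showing that we need to try only polynomially many candidate substitutions for the non-DO variables, and that each candidate substitution preserves the structural properties the DO-algorithm relies on.

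First I would fix notation: let $\Gamma_1 = \{z_1,\ldots,z_c\}$ be the set of non-DO variables occurring in $t_1$, with $|\Gamma_1| = c$ a constant, and recall that $t_2$ is variable-free. I would transform $t_2$ into its canonical form $t^2$ in linear time as in Theorem~\ref{thm:asspoly}. The key structural claim is that in any unifier, each variable of $\Gamma_1$ must be mapped to a term that, in canonical form, corresponds to a \emph{contiguous block of children} of some node of $t^2$ (or to a single full subterm). This is exactly the associative-matching phenomenon already exploited in the DO-case: because $f$ is associative, a variable sitting among the arguments of an $f$-node absorbs a consecutive run of siblings on the $t_2$ side. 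Consequently, each $z_i$ has at most $O(n^2)$ sensible target values --- one for each choice of a node of $t^2$ together with a left and right endpoint delimiting a consecutive run of its children. Enumerating jointly over all $c$ variables gives at most $O(n^{2c}) = O(n^{2|\Gamma_1|})$ candidate substitutions, which is polynomial since $c$ is constant.

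For each candidate substitution $\sigma\colon \Gamma_1 \to \mathcal{T}$, I would form $t_1\sigma$; since $\sigma$ substitutes variable-free terms for every non-DO variable, the resulting term $t_1\sigma$ contains only DO-variables, and $t_2$ remains variable-free, so $(t_1\sigma, t_2)$ is a legitimate input to $AssocMatchDO$ after canonicalization. I would then run $AssocMatchDO(t^1,t^2)$ on the canonical forms and return \textbf{true} iff some candidate $\sigma$ yields a positive answer. Correctness in one direction is immediate: if some $\sigma$ leads to a successful DO-match, composing $\sigma$ with the DO-unifier produces a genuine unifier of $t_1$ and $t_2$. For the converse I would argue that any unifier $\theta$ of $t_1$ and $t_2$ restricts to a substitution on $\Gamma_1$ whose values can, after associative canonicalization, be realized by one of the enumerated consecutive-block targets, so that $\sigma = \theta|_{\Gamma_1}$ (up to canonical equivalence) is among the candidates tried, and the remaining DO-variables are handled correctly by $AssocMatchDO$.

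The main obstacle I expect is the completeness half of the correctness argument, specifically the claim that it suffices to consider substitutions mapping each non-DO variable to a \emph{consecutive run of children} of a single node rather than to an arbitrary variable-free term. One must rule out the possibility that a unifier assigns to some $z_i$ a term whose canonical structure does not align with any single contiguous block on the $t_2$ side. I would address this by using associativity and the variable-freeness of $t_2$: in a successful unification the occurrence of $z_i$ in $t_1$ sits as an argument of some associative function node, and the matching portion of $t^2$ must therefore be a flattened consecutive subsequence of arguments of the corresponding $t^2$-node, which is precisely a consecutive-block target. Care is needed when different occurrences of the \emph{same} non-DO variable are constrained simultaneously --- the chosen block must be consistent across all its occurrences --- but since we fix a single value for $z_i$ before running the DO-algorithm, this consistency is enforced automatically. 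Bounding the number of such blocks by $O(n^2)$ per variable, and hence the total count by $O(n^{2|\Gamma_1|})$, then yields the stated polynomial running time.
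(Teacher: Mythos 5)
Your proposal is correct and follows essentially the same route as the paper: enumerate, for each of the constantly many non-DO variables, the $O(n^2)$ candidate values given by consecutive blocks of children of nodes in the canonical form $t^2$, substitute, and run $AssocMatchDO$ on the resulting DO-instance, for a total of $O(n^{2|\Gamma_1|})$ polynomial-time trials. The only difference is that you spell out the completeness argument (why any unifier must assign each non-DO variable a consecutive-block value), which the paper leaves implicit.
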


\section{Commutative Unification}

A function $f$ is called \emph{commutative} if
$f(x,y)=f(y,x)$ always holds.
Commutative unification is a variant of unification
in which some functions are commutative.
It is known that 
even commutative matching
is NP-hard (by a reduction from 3SAT) \cite{benanav87}.
In this section,
we present a parameterized algorithm for commutative matching
and a polynomial-time algorithm for commutative unification
with a bounded number of variables.

First we note that commutative unification can be done in polynomial time
if both $t_1$ and $t_2$ are variable-free because it is equivalent to
the rooted unordered tree isomorphism problem.

\begin{proposition}
Commutative unification can be done in polynomial time
if both $t_1$ and $t_2$ are variable-free.
\end{proposition}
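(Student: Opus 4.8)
The plan is to reduce the claim to a purely syntactic equality test and then to recognize that test as a rooted tree isomorphism problem that admits a classical polynomial-time algorithm.

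First I would observe that when both $t_1$ and $t_2$ are variable-free, no substitution can change either term: for every $\theta$ we have $t_1\theta = t_1$ and $t_2\theta = t_2$. Hence $t_1$ and $t_2$ are unifiable if and only if $t_1 = t_2$ holds modulo the commutative axioms $f(x,y)=f(y,x)$ for the commutative symbols $f$. Writing $=_C$ for equality modulo these axioms, the problem is exactly to decide $t_1 =_C t_2$.

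Next I would characterize $=_C$ structurally. Regarding each term as a rooted ordered tree, as in the paper's convention, two variable-free terms satisfy $t_1 =_C t_2$ if and only if their trees are isomorphic under the following notion: the isomorphism must preserve the function symbol at every node and must respect the left-to-right order of children at every node whose symbol is \emph{not} commutative, while it is allowed to transpose the two children at every node whose symbol \emph{is} commutative. This is a \emph{mixed} ordered/unordered rooted tree isomorphism, in which commutative nodes behave as unordered and ordinary nodes as ordered. The equivalence follows by induction on tree height, since $=_C$ is generated precisely by swapping arguments at commutative symbols.

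I would then solve the mixed-isomorphism test by the standard bottom-up canonical labelling of Aho, Hopcroft, and Ullman, adapted to the mixed setting. Processing the nodes of both trees in post-order, I assign to each subterm a canonical signature built from its function symbol together with the already computed signatures of its children; at a commutative node I first sort the two child signatures into a canonical order before combining them, whereas at a non-commutative node I keep the children in their given order. By the structural characterization, two subterms receive the same signature if and only if they are $=_C$-equal, so I finally test $t_1 =_C t_2$ by comparing the signatures of the two roots. Each node is processed once, and the work per node is bounded by sorting a constant number of child signatures and comparing strings, so the whole procedure runs in polynomial time, which establishes the proposition. The step I expect to be the main (though modest) obstacle is verifying the correctness of this labelling in the presence of \emph{both} commutative and non-commutative symbols: one must check that the induction composes correctly through mixed nodes, so that equal signatures propagate upward and distinct subterms never collide. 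The running time can in fact be brought down to near-linear with radix sorting or hashing of signatures.
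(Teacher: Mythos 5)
Your proposal is correct and takes essentially the same route as the paper, which disposes of this proposition in one sentence by noting that variable-free commutative unification is equivalent to rooted unordered tree isomorphism, solvable in polynomial time. Your write-up merely fills in the details the paper leaves implicit -- the reduction to equality modulo commutativity, the careful treatment of the mixed ordered/unordered case (only \emph{some} symbols are commutative, a point the paper glosses over and handles later via the pairwise dynamic program $TestCommutIdent$), and the AHU canonical-labelling algorithm, which is the standard polynomial-time method for exactly this isomorphism test.
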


Next we consider commutative matching.
We construct a 0-1 table $D[u,v]$ for node pairs
$(u,v) \in N(t_1) \times N(t_2)$
by applying dynamic programming in a bottom-up manner,
where $D[u,v]=1$ iff $(t_1)_u$ is unifiable to $(t_2)_v$.
It is enough to construct such table entries only for pairs
with the same depth.
We also construct a table $\Theta[u,v]$, where each element
holds a set of possible substitutions $\theta$
such that $(t_1)_u \theta = (t_2)_v$.

Let $\theta_1 = \{x_{i_1}/t_{i_1},\ldots,x_{i_p}/t_{i_p} \}$ and
$\theta_2 = \{x_{j_1}/t_{j_1},\ldots,x_{j_p}/t_{j_q} \}$ be
substitutions.
$\theta_1$ is said to be \emph{compatible} with
$\theta_2$ if there exists no variable $x$ such that
$x = x_{i_a} = x_{j_b}$ but 
$t_{i_a} \neq t_{j_b}$.
Let $\Theta_1$ and $\Theta_2$ be sets of substitutions.
We define $\Theta_1 \Join \Theta_2$ by
\[
\Theta_1 \Join \Theta_2 ~=~  \{
\theta_i \cup \theta_j |~
\theta_i \in \Theta_1 \mbox{ is compatible with }
\theta_j \in \Theta_2 \}.
\]
For a node $u$, $u_L$ and $u_R$ denote the left and right children of
$u$, respectively.

\begin{rm}
\begin{tabbing}
\quad \= \quad \= \quad \= \quad \= \quad \= \quad \= \quad \= \kill
\> Procedure $CommutMatch(t_1,t_2)$\\
\> \> {\bf for all} pairs $(u,v) \in N(t_1) \times N(t_2)$ with the same depth\\
\> \> {\bf do} \hspace{6cm} /* in a bottom-up way */ \\
\> \> \> {\bf if} $(t_1)_u$ is a variable {\bf then} \\
\> \> \> \> $\Theta[u,v] \leftarrow \{ \{(t_1)_u/(t_2)_v\} \}$;
$D[u,v] \leftarrow 1$ \\
\> \> \> {\bf else if} $(t_1)_u$ does not contain a variable {\bf then}\\ 
\> \> \> \> $\Theta[u,v] \leftarrow \emptyset$; \\
\> \> \> \> {\bf if} $(t_1)_u = (t_2)_v$ {\bf then} $D[u,v] \leftarrow 1$
{\bf else} $D[u,v] \leftarrow 0$ \\
\> \> \> {\bf else if} $\gamma((t_1)_u) \neq \gamma((t_2)_v)$ {\bf then}\\
\> \> \> \> $\Theta[u,v] \leftarrow \emptyset$; $D[u,v] \leftarrow 0$ \\
\> \> \> {\bf else} \\
\> \> \> \> $\Theta[u,v] \leftarrow \emptyset$; $D[u,v] \leftarrow 0$; \\
\> \> \> \> {\bf for all} $(u_1,u_2,v_1,v_2) \in
\{ (u_L,u_R,v_L,v_R), (u_R,u_L,v_L,v_R) \}$ {\bf do} ~~~~~ -(\#)\\
\> \> \> \> \> {\bf if} $D[u_1,v_1]=1$ and $D[u_2,v_2]=1$ and
$\Theta_1[u_1,v_1] \Join \Theta_2[u_2,v_2] \neq \emptyset$\\
\> \> \> \> \> {\bf then} $\Theta[u,v] \leftarrow \Theta[u,v] \cup 
(\Theta_1[u_1,v_1] \Join \Theta_2[u_2,v_2])$;
$D[u,v] \leftarrow 1$;\\
\> \> {\bf if} $D[r(t_1),r(t_2)]=1$ {\bf then return} true {\bf else return} false.
\end{tabbing}
\end{rm}

\begin{theorem}
Commutative matching can be done in $O(2^{k} poly(m,n))$ time where
$k$ is the number of variables in $t_1$.
\end{theorem}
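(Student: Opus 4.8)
The plan is to establish the theorem in two stages: first prove that the procedure $CommutMatch$ correctly decides unifiability, and then bound its running time by controlling the sizes of the substitution sets $\Theta[u,v]$ and the cost of the join operation $\Join$.

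For correctness, I would prove by induction on the common depth of a pair $(u,v)$, processing pairs bottom-up, the invariant that $D[u,v]=1$ iff $(t_1)_u$ is unifiable to $(t_2)_v$, and that $\Theta[u,v]$ is exactly the set of substitutions $\theta$, restricted to the variables occurring in $(t_1)_u$, satisfying $(t_1)_u\theta=(t_2)_v$. The base cases are immediate: when $(t_1)_u$ is a variable, any subtree of $t_2$ is a legal value; when $(t_1)_u$ is variable-free, unifiability is plain equality; and when the two root symbols differ, no unifier exists. For the inductive (binary commutative) case, commutativity means $(t_1)_u$ matches $(t_2)_v$ exactly when its two children can be matched to the children of $v$ in one of the two pairings examined in line (\#). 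The role of $\Join$ is precisely to glue the two child-substitutions while discarding pairs that assign different values to a shared variable; hence correctness for non-DO terms hinges on verifying that the \emph{compatible} union realizes set-theoretically the substitutions that are consistent on the whole of $(t_1)_u$.

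For the running time, the number of processed pairs $(u,v)$ is $O(mn)$, so it suffices to bound the work per pair, which is dominated by evaluating $\Theta[u_1,v_1]\Join\Theta[u_2,v_2]$. The central lemma I would prove is that $|\Theta[u,v]|\le 2^{k}$, where $k$ is the number of variables of $t_1$; in fact I expect the sharper bound $|\Theta[u,v]|\le 2^{k_u-1}$ (with $k_u$ the number of distinct variables in $(t_1)_u$, and $\Theta[u,v]=\emptyset$ when $k_u=0$) to hold, by induction on the structure of $(t_1)_u$. The inductive step combines the two swap options
\[
\Theta[u,v]=\bigl(\Theta[u_L,v_L]\Join\Theta[u_R,v_R]\bigr)\cup\bigl(\Theta[u_R,v_L]\Join\Theta[u_L,v_R]\bigr),
\]
so when both children carry variables and share none, each join has size at most $2^{k_{u_L}-1}\,2^{k_{u_R}-1}=2^{k_u-2}$ and the union stays within $2^{k_u-1}$. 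When one child is variable-free, I would argue that at most one of the two options is non-empty unless the two target subtrees of $v$ coincide (in which case the two options produce identical sets), again keeping the bound. Applied at the root this yields $|\Theta[r(t_1),r(t_2)]|=O(2^{k})$.

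The hardest part will be the running-time bookkeeping in the presence of \emph{repeated} variables, because the cost of a single join is the product $|\Theta[u_1,v_1]|\cdot|\Theta[u_2,v_2]|$ of compatibility tests, and two sibling subtrees may involve many of the same variables while each already carries close to $2^{k-1}$ distinct substitutions. I would handle this by showing that the variables of $(t_1)_{u_L}$ and $(t_1)_{u_R}$ effectively partition the relevant degrees of freedom (or, when they overlap, that the structural constraints force the two factors to be far smaller than their worst-case sizes), so that the product entering any join is itself $O(2^{k})$, and that each compatibility test and each value comparison costs only $poly(m,n)$ since every stored value is a subtree of $t_2$. Summing $O(2^{k}\,poly(m,n))$ over the $O(mn)$ pairs then gives the claimed bound; verifying that the join cost never degrades to the naive $4^{k}$ is the main obstacle and the step I would scrutinize most carefully.
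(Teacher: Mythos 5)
Your overall architecture coincides with the paper's: prove the DP invariant for $D$ and $\Theta$, bound $|\Theta[u,v]|$ by $2^{k_u-1}$ via exactly the recurrence the paper uses (the paper writes it as $B_1=1$, $B_{i+j}=2B_iB_j$, giving $B_i=2^{i-1}$), and handle the variable-free-child case by the same ``at most one pairing contributes, or the two targets coincide'' argument (the paper states this as $|\Theta[u,v]|\le\max(|\Theta[u_R,v_L]|,|\Theta[u_R,v_R]|)$). Up to that point you are reconstructing the paper's proof.

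The genuine gap is in the step you yourself flag as the one to scrutinize: your plan to bound the \emph{product} $|\Theta[u_1,v_1]|\cdot|\Theta[u_2,v_2]|$ by $O(2^k)$ is unachievable, because that claim is false. Take $t_1=f(T,T)$ where $T$ is a balanced binary tree whose leaves are $k$ distinct variables $x_1,\ldots,x_k$, and match it against $t_2=f(S,S)$ where $S$ is the same shape with $k$ distinct constants. Then both child tables have the worst-case size $2^{k-1}$ while sharing \emph{all} $k$ variables, so the product is $4^{k-1}$; neither of your escape hatches applies (the variable sets do not partition, and neither factor is ``far smaller than worst case''). Consequently a nested-loop join that tests all pairs really does cost $\Theta(4^k)$, and your accounting breaks. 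What saves the theorem is not a bound on the product but a bound on the \emph{output}: compatible pairs must agree on the shared variables, so the number of compatible pairs equals the join's output size, which is at most $2^{k_u-1}$ by the size invariant (in the example above it is $2^{k-1}$, not $4^{k-1}$). The paper therefore computes each join by sorting both tables on the values of the shared variables and merging, so the cost is proportional to input plus output sizes (times $poly(m,n)$ for comparisons), i.e.\ $O(2^k\,poly(m,n))$ per join --- this is the ``sorting as in usual join operations'' step, and it is the idea your proposal is missing. A secondary, fixable omission: your induction for $|\Theta[u,v]|\le 2^{k_u-1}$ only treats disjoint variable sets and a variable-free child; when the two children share variables, the bound still holds but needs a strengthened statement (bound the number of completions extending any fixed assignment of a subset of the variables), since the naive product bound $2^{k_L-1}2^{k_R-1}$ exceeds $2^{k_u-2}$ once $k_u<k_L+k_R$. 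The paper glosses over this as well, but unlike your product lemma, the size bound itself is true.
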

\begin{proof}
The correctness follows from the observation that
each variable is substituted by a term without variables and
$f(x,y)=f(y,x)$ is taken into account at step (\#).

In order to analyze the time complexity,
we consider the size (i.e., the number of elements) 
of $\Theta[u,v]$.
An important observation is that
if $(t_1)_{u_L}$ does not
contain a variable,
$|\Theta[u,v]| \leq \max(|\Theta[u_R,v_L]|,|\Theta[u_R,v_R]|)$ holds
(an analogous property holds for $(t_1)_{u_R}$).
Let $B_i$ denote the maximum size of $\Theta[u,v]$ when
the number of (distinct) variables in $(t_1)_u$ is $i$.
Then, we can see that the following relations hold:
$
B_1 = 1,~~B_{i+j} =  2 B_i B_j,
$
from which $B_i = 2^{i-1}$ follows.
Therefore, computation of
$\Theta_1[u_1,v_1] \Join \Theta_2[u_2,v_2]$ can be done in
$O(2^{k} poly(m,n))$ time by using `sorting' as in usual `join' operations.
Then, we can see that the total computation time is also
$O(2^{k} poly(m,n))$.
\qed
\end{proof}

Next, we consider the case where both $t_1$ and $t_2$ contain variables.
As in the case of linear time unification \cite{paterson78},
we assume that two variable free terms $t_1$ and $t_2$
are represented by a DAG (directed acyclic graph) $G(V,E)$,
where $t_1$ and $t_2$ respectively correspond to $r_1$ and $r_2$ 
of indegree 0 ($r_1,r_2 \in V$).
Then, whether $r_1$ and $r_2$ represent the same term can be tested 
in polynomial time with respect to the size of $G$
by using the following procedure, where $t_u$ denotes the term
corresponding to a node $u$ in $G$.

\begin{rm}
\begin{tabbing}
\quad \= \quad \= \quad \= \quad \= \quad \= \quad \= \quad \= \quad \= \quad \= \kill
\> Procedure $TestCommutIdent(r_1,r_2,G(V,E))$\\
\> \> {\bf for all} $u \in V$ {\bf do}~~~~~~~~~~~~~~~~~~~~~~~~~~~~~~~~~~~~~~~~~~~~~/* in post-order */\\
\> \> \> {\bf for all} $v \in V$ {\bf do}~~~~~~~~~~~~~~~~~~~~~~~~~~~~~~~~~~~~~~~~~~/* in post-order */\\
\> \> \> \> {\bf if} $u=v$ {\bf then} $D[u,v] \leftarrow 1$; {\bf continue};\\
\> \> \> \> {\bf if} $t_u$ or $t_v$ is a constant {\bf then} \\
\> \> \> \> \> {\bf if} $t_u = t_v$ {\bf then} $D[u,v] \leftarrow 1$ {\bf else} $D[u,v] \leftarrow 0$;\\
\> \> \> \> {\bf else}\\
\> \> \> \> \> Let $u = f_1(u_L,u_R)$ and $v=f_2(v_L,v_R)$; \\
\> \> \> \> \> {\bf if} $f_1 = f_2$ {\bf then}\\
\> \> \> \> \> \> {\bf if} ($D[u_L,v_L]=1$ and $D[u_R,v_R]=1$) or \\
\> \> \> \> \> \> \> \> \> ($D[u_L,v_R]=1$ and $D[u_R,v_L]=1$) \\
\> \> \> \> \> \> {\bf then} $D[u,v] \leftarrow 1$ {\bf else} $D[u,v] \leftarrow 0$\\
\> \> \> \> \> {\bf else} $D[u,v] \leftarrow 0$;\\
\> \> {\bf if} $D[r_1,r_2]=1$ {\bf then} {\bf return} true {\bf else} {\bf return} false.
\end{tabbing}
\end{rm}

In order to cope with terms with variables,
we consider all possible mappings from the set of variables to
$N(t_1) \cup N(t_2)$.
For each mapping,
we replace all appearances of the variables by the corresponding nodes,
resulting in a DAG to which we can apply $TestCommutIdent(r_1,r_2,G(V,E))$.
The following is a pseudo-code of the procedure for terms with variables.

\begin{rm}
\begin{tabbing}
\quad \= \quad \= \quad \= \quad \= \quad \= \quad \= \quad \= \kill
\> Procedure $CommutUnify(t_1,t_2)$\\
\> \> {\bf for all} mappings $M$ from a set of variables to nodes in $t_1$ and $t_2$ {\bf do} \\
\> \> \> {\bf if} there exists a directed cycle (excluding a self-loop) {\bf then} {\bf continue};\\
\> \> \> Replace each variable having a self-loop with a distinct constant symbol;\\
\> \> \> Replace each occurrence of a variable node $u$ with node $M(u)$;\\
\> \> \> \> \> \> /* if  $M(u)=v$ and $M(v)=w$, $u$ is replaced by $w$ */\\ 
\> \> \> Let $G(V,E)$ be the resulting DAG;\\
\> \> \> Let $r_1$ and $r_2$ be the nodes of $G$ corresponding to $t_1$ and $t_2$;\\
\> \> \> {\bf if} $CommutIdent(r_1,r_2,G(V,E))=$true {\bf then} {\bf return} true;\\
\> \> {\bf return} false.
\end{tabbing}
\end{rm}

\noindent
Then, we have the following, where the proof is given in Appendix A4.

\begin{theorem}
Commutative unification can be done in polynomial time
if the number of variables in $t_1$ and $t_2$ is 
bounded by a constant.
\label{thm:com-unif-poly}
\end{theorem}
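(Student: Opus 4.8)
The plan is to establish the correctness and the polynomial running time of the procedure $CommutUnify(t_1,t_2)$, whose essential point is that \emph{identifying each variable with an already existing node of the combined DAG loses no generality}. Throughout, write $N = |N(t_1)| + |N(t_2)| = O(m+n)$, let $k$ be the constant number of distinct variables, and write $=_C$ for equality of ground terms modulo commutativity. For a mapping $M$ from the variables to nodes, let $\sigma_M$ be the substitution that replaces each variable $x$ by the term rooted at $M(x)$, chasing the chain $x \mapsto M(x) \mapsto \cdots$ and turning each self-loop into a fresh constant; the preliminary cycle test guarantees this chasing terminates, so $\sigma_M$ is well defined and maps into finite terms. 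I would then argue four points: correctness of the inner test $TestCommutIdent$, soundness of $CommutUnify$, completeness, and the time bound.

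First I would show that $TestCommutIdent$ computes, in post-order over the pairs $(u,v)\in V\times V$, the entry $D[u,v]=1$ exactly when $t_u =_C t_v$. This follows by induction along the post-order: the base cases ($u=v$, or one side a constant) are immediate, and for two function nodes sharing a symbol the two tested child pairings $(u_L,v_L),(u_R,v_R)$ and $(u_L,v_R),(u_R,v_L)$ are precisely the two ways a binary commutative symbol can agree. This is merely the DAG realization of the earlier proposition equating variable-free commutative unification with unordered-tree isomorphism, and it runs in $O(|V|^2)$ time. \emph{Soundness} of $CommutUnify$ is then immediate: if it returns true for some $M$, the substituted DAG satisfies $t_1\sigma_M =_C t_2\sigma_M$, so $\sigma_M$ is a unifier and $t_1,t_2$ are unifiable.

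The heart of the proof, and the step I expect to be hardest, is \emph{completeness}: whenever $t_1,t_2$ are unifiable, some enumerated $M$ makes the test succeed. Given any unifier $\theta$, I would define an equivalence on the DAG nodes by $u \sim v \iff t_u\theta =_C t_v\theta$ and verify that it is a commutative congruence, i.e.\ $r_1 \sim r_2$, no class mixes two distinct function symbols, equivalent binary nodes have children equivalent in one of the two orders, and (because $\theta$ passes the occur check) the quotient is acyclic. From $\sim$ I extract $M$ by sending each variable to a function or constant node of its class when one exists, and otherwise routing an entire variable-only class through a single self-loop so that all its members receive one common fresh constant. A well-founded induction over the acyclic quotient then yields $x\sigma_M =_C x\theta$ for every variable $x$, hence $t_1\sigma_M =_C t_1\theta =_C t_2\theta =_C t_2\sigma_M$, and by the correctness of $TestCommutIdent$ the test reports true. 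The delicate points are the claim that a unifier may always be taken to map each variable onto a term already represented (modulo the congruence) by a node of the combined DAG, which the congruence-closure viewpoint above makes precise, and the verification that the extracted $M$ contains no genuine (non-self-loop) cycle, so that it indeed survives the cycle test.

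Finally, the number of mappings $M$ from the $k$ variables into the $N$ nodes is $N^k = O((m+n)^k)$, which is polynomial since $k$ is a constant, and for each $M$ the directed-cycle detection, the variable replacement, and the $O(|V|^2)$ execution of $TestCommutIdent$ are all polynomial in $m+n$. Multiplying the two bounds gives an overall polynomial running time and establishes Theorem~\ref{thm:com-unif-poly}.
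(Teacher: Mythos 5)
Your overall architecture (enumerate the mappings $M$, test each with $TestCommutIdent$, trivial soundness, completeness as the crux, and the $N^k$ count) matches the paper's, but your completeness argument takes a genuinely different route. The paper argues completeness by lifting the commutative unifier to a syntactic one: if $t_1,t_2$ are commutatively unifiable, one can swap left and right arguments in some subterms to obtain ordinarily unifiable terms $t_1',t_2'$, take a (syntactic) unifier $\theta$ of those, assign distinct constants to the variables left unbound, and read the remaining bindings $x/t \in \theta$ directly as the mapping $M$ --- implicitly using the fact that such a unifier, in its DAG/triangular form, binds each variable to a node already present in the input DAG. Your congruence-based argument (define $u \sim v$ iff $t_u\theta =_C t_v\theta$, verify it is a commutative congruence with acyclic quotient, extract $M$ from the classes) avoids both borrowed facts; it is more self-contained and makes precise exactly the point the paper leaves implicit, namely why a unifier can always be assumed to map each variable onto an existing node.

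However, one step of your proof is false as stated: the claim that the well-founded induction yields $x\sigma_M =_C x\theta$ for every variable $x$. Take $t_1 = f(x,a)$ and $t_2 = f(a,y)$ with the unifier $\theta = \{x/h(b),\, y/h(b)\}$. The class of $x$ under $\sim$ is $\{x,y\}$, which is variable-only, so your construction sends both variables to one common fresh constant $c$, giving $x\sigma_M = c \neq_C h(b) = x\theta$. What your induction actually establishes is the congruence-respecting invariant: for all nodes $u,v$, if $u \sim v$ then $t_u\sigma_M =_C t_v\sigma_M$ (variable-only classes agree because they share one fresh constant; classes containing function nodes use the two-orders condition on children together with the induction hypothesis). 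Since $r_1 \sim r_2$, this still yields $t_1\sigma_M =_C t_2\sigma_M$, and the theorem follows. So the gap is local and your construction is exactly the right one --- but the stated invariant must be replaced by the class-wise one, since equality with $\theta$ itself cannot hold when $\theta$ binds a variable-only class to a term not represented in the DAG.
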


\section{Associative-Commutative Unification}

Associative-commutative unification is a variant of unification
in which some functions can be both associative and commutative.
We show that associative-commutative matching is
$W[1]$-hard even if all every function is associative and commutative,
where the proof is a bit involved and is given in Appendix A5.

\begin{theorem}
Matching is $W[1]$-hard with respect to the number of variables
even if every function symbol is associative and commutative.
\label{thm:ac-hard}
\end{theorem}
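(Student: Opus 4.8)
The plan is to give an FPT-reduction from the \textsc{Clique} problem, which is $W[1]$-hard with respect to the solution size $k$ \cite{flum06}. Given a graph $G=(V,E)$ with $V=\{v_1,\dots,v_n\}$ and an integer $k$ (and assuming $|E|\ge 2$, the remaining cases being trivial), I would introduce $k$ variables $x_1,\dots,x_k$ whose intended meaning is that $x_i$ is mapped to the $i$-th vertex of a clique; each vertex $v_a$ is represented by its own constant. I would use three function symbols $f$, $g$, and $h$, all declared associative-commutative (the remaining symbols being arity-$0$ constants, for which associativity and commutativity are vacuous). The single observation that drives the whole construction is that, since $g$ has arity $2$ and is commutative, a ground term $g(v_a,v_b)$ is nothing but an unordered pair, and can therefore serve as the encoding of the edge $\{v_a,v_b\}$.

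For the ground side I would let $C$ be the $h$-term that lists all edges, i.e. $C=h(g(v_{a_1},v_{b_1}),\dots,g(v_{a_{|E|}},v_{b_{|E|}}))$. For each pair $i<j$ I introduce an \emph{edge-selection gadget} $h(g(x_i,x_j),W_{ij})$, where $W_{ij}$ is a fresh variable occurring nowhere else. Finally I set $t_1$ to be the $f$-combination of all $k(k-1)/2$ gadgets and $t_2$ to be the $f$-combination of the same number of identical copies of $C$. Then $t_2$ is variable-free, so this is indeed a matching instance, and the number of variables is $k+k(k-1)/2=O(k^2)$, a function of $k$ alone; hence the reduction is an FPT-reduction.

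Correctness would be argued through the semantics of AC matching. Since every gadget is $h$-rooted, applying a substitution never turns it into an $f$-rooted term, so matching $t_1\theta=t_2$ at the top level forces the multiset of gadget images to equal the multiset of copies of $C$; as the copies are identical, each gadget must individually match $C$. Matching $h(g(x_i,x_j),W_{ij})$ against $h(e_1,\dots,e_{|E|})$ then splits the edge multiset into two nonempty parts: the part matched by $g(x_i,x_j)$ can only be a single edge $g(v_a,v_b)$ (a larger part would be $h$-rooted and cannot match the $g$-rooted term), which forces $\{x_i\theta,x_j\theta\}=\{v_a,v_b\}\in E$, while the wildcard $W_{ij}$ absorbs all the remaining edges at once, using the associativity of $h$. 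Because $G$ has no self-loops, enforcing an edge on every pair makes the $x_i\theta$ pairwise distinct, so they form a $k$-clique; conversely any $k$-clique $\{v_{c_1},\dots,v_{c_k}\}$ yields a match via $x_i\mapsto c_i$ together with the obvious choice of each $W_{ij}$.

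The main obstacle is not the idea but keeping the variable count bounded by a function of $k$ while still letting each gadget ``choose'' one edge out of the linearly many edges of $G$: a naive gadget would need one wildcard per unused edge, blowing the number of variables up to depend on $|E|$. The device that resolves this is the single wildcard $W_{ij}$ that swallows the entire remainder of the edge list in one shot, which is legitimate precisely because $h$ is associative. The accompanying technical burden is to verify rigorously that this wildcard can never be abused to match a non-edge and to handle the flattening of the AC symbols carefully; this bookkeeping is where I expect the bulk of the work to lie.
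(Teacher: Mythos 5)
Your reduction is correct, but it is genuinely different from the one in the paper. The paper reduces from LCS (as it does for its other hardness results, Thms.~1 and~3): since AC functions destroy all ordering information, the paper encodes the \emph{position} of each letter $s_i[j]$ by the \emph{size} of an attached term $f_2(a,\ldots,a)$ with $j$ copies of $a$, uses variables $y_{i,h}$ to match these unary counters, and uses one wildcard $z_i$ per string to absorb the unselected letters; monotonicity of the size-encoding then forces the selected letters to appear in the same order in every string, yielding a common subsequence. You instead reduce from Clique, exploiting commutativity positively rather than working around it: a commutative binary $g$ makes $g(v_a,v_b)$ an unordered pair, i.e.\ literally an edge, each gadget $h(g(x_i,x_j),W_{ij})$ forces the pair $\{x_i\theta,x_j\theta\}$ to be an edge of $G$ (root-symbol reasoning pins $g(x_i,x_j)$ to exactly one edge term, and the single associative wildcard $W_{ij}$ absorbs the rest of the edge list), and the $f$-level bijection imposes this on all $\binom{k}{2}$ pairs at once. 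Your argument is sound: the parameter blow-up is $O(k^2)$, the $|E|\ge 2$ proviso handles the degenerate wildcard case, and the worry you flag about the wildcard matching a ``non-edge'' is actually vacuous since $C$ contains only edge terms. What each approach buys: yours is arguably cleaner and more self-contained, needing no size-encoding machinery and giving a very transparent correctness proof; the paper's LCS route keeps all three hardness proofs in the paper structurally uniform and reuses the same source problem, and its size-encoding trick is the more flexible tool when the target problem has no natural ``unordered pair'' structure to latch onto.
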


It is shown in \cite{benanav87} that associative-commutative matching
can be done in polynomial time if $t_1$ is a DO-term.
We can extend their algorithm as below.
For extension, it is enough to add a condition in their algorithm that
$f((t_1)_{u_1},\ldots,(t_1)_{u_p})$ and
$f((t_2)_{v_1},\ldots,(t_2)_{v_q})$ can be unified
if $(t_1)_{u_i}$ and $(t_2)_{v_j}$ are variables for some $i,j$.

\begin{proposition}
Associative-commutative unification
can be done in polynomial time if both $t_1$ and $t_2$ are DO-terms.
\label{prop:do-ac-unif}
\end{proposition}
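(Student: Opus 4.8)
The plan is to follow the structure of the DO associative algorithm of Theorem~\ref{thm:asspoly} and of Benanav et al.'s DO associative-commutative matching \cite{benanav87}, localizing all reasoning at individual AC nodes. First I would put $t_1$ and $t_2$ into canonical form, flattening every maximal chain of a fixed AC symbol $f$ so that an AC node is identified with the multiset of its children. I would then fill a $0$--$1$ table $D[u,v]$ bottom-up, with $D[u,v]=1$ iff $(t_1)_u$ and $(t_2)_v$ are unifiable. Leaves and variables are immediate (a variable, being a DO-variable, unifies with anything), and a node is non-unifiable whenever the two root symbols differ. The whole content is therefore the AC case $u=f(u_1,\dots,u_p)$, $v=f(v_1,\dots,v_q)$ with the same symbol $f$, where I must decide unifiability of the two child-multisets. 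Writing $P_L,X_L$ (resp. $P_R,X_R$) for the non-variable and variable children on the left (resp. right), there are three regimes: both $X_L,X_R$ nonempty; exactly one nonempty; and both empty.

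The one genuinely new regime is $X_L\neq\emptyset$ and $X_R\neq\emptyset$, and the key claim I would prove is that here the two subterms are \emph{unconditionally} unifiable, matching the extra rule quoted before the statement. I would exhibit a unifier directly. Fix $x\in X_L$, $y\in X_R$, a constant $c$, send every other variable of $X_L\cup X_R$ to $c$, and set
\[
x\mapsto f\!\left(P_R\uplus\{c^{|X_R|}\}\right),\qquad y\mapsto f\!\left(P_L\uplus\{c^{|X_L|}\}\right),
\]
where $\uplus$ is multiset union, $\{c^{t}\}$ denotes $t$ copies of $c$, and an $f$ applied to a single element is read as that element. After substitution and re-flattening, the left child-multiset becomes $P_L\uplus P_R\uplus\{c^{\,|X_L|-1+|X_R|}\}$ and the right one becomes $P_R\uplus P_L\uplus\{c^{\,|X_R|-1+|X_L|}\}$, which coincide, so $u\theta=v\theta$. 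The only subtlety is that $P_R$ may carry nested variables; but because all variables are DO they occur only in that single right-hand position, so feeding $P_R$ through $x$ into the left side reproduces exactly the same (unsubstituted) variables that already stand on the right, and no occur-check or consistency conflict arises.

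For the remaining regimes I would reuse the procedure of \cite{benanav87}. When exactly one side has variables the situation is, up to swapping the roles of $t_1$ and $t_2$, precisely their DO matching step: each non-variable child must be paired with a distinct unifiable child on the other side (using the already-computed entries $D[u_i,v_j]$), and the variables of the side that has them absorb the leftover children, each absorbing at least one; this is decided by a bipartite matching together with a simple counting condition. When neither side has variables, unifiability reduces to a perfect matching between $P_L$ and $P_R$ over the pairs with $D=1$. In every case the decision at $(u,v)$ depends only on the children's $D$-values, and soundness of this recurrence rests on DO-ness: the unifiers built at distinct nodes use pairwise disjoint variables and hence compose into a single global unifier, exactly the property that fails in the general setting responsible for the $W[1]$-hardness of Theorem~\ref{thm:ac-hard}.

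Finally, for complexity there are $O(mn)$ node pairs, each handled by one bipartite matching (or the constant-time unconditional rule), so the total time is polynomial. I expect the main obstacle to be the correctness argument for the new regime together with the composition claim: one must verify carefully that the local substitutions introduced at different AC nodes never interfere, which is precisely where the distinct-occurrence hypothesis is indispensable.
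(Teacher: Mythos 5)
Your proposal takes essentially the same route as the paper: the paper also proves this proposition by extending the DO associative-commutative matching algorithm of Benanav et al.\ \cite{benanav87} with exactly your new rule, namely that two same-headed AC subterms can be unified whenever both child lists contain a variable. You actually supply more detail than the paper, which merely asserts that this added condition suffices, whereas you exhibit an explicit unifier ($x\mapsto f(P_R\uplus\{c^{|X_R|}\})$, $y\mapsto f(P_L\uplus\{c^{|X_L|}\})$) for the new regime and justify composition of the local unifiers via the distinct-occurrence hypothesis.
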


\newpage

\section*{Appendix}

\subsection*{A1. Latter Part of the Proof of Theorem~\ref{thm:edhard}}

Next we consider the case in which the number of occurrences of each
variable is bounded by 3.
For that purpose, we rename $j$-th occurrence of $x_i$ in $s^1$
by $x_{i,j}$ and then we append $t^1 t^2 \cdots t^l$ and
$u^1 u^2 \cdots u^l$ to the end of $s^1$ and $s^2$, respectively,
where $t^i$ and $u^i$ are defined as follows:
\begin{eqnarray*}
t^i & = & \# x_{i,1} y_{i,1} y_{i,1} x_{i,2} y_{i,2} y_{i,2} x_{i,3} y_{i,3} y_{i,3} x_{i,4} \cdots x_{i,l-1} y_{i,l-1} y_{i,l-1} x_{i,l}, \\
u^i & = & \# z_{i,1} z_{i,1} z_{i,2} z_{i,2} z_{i,2} z_{i,3} z_{i,3} z_{i,3} z_{i,4} z_{i,4} \cdots z_{i,l-1} z_{i,l-1} z_{i,l} z_{i,l}.
\end{eqnarray*}
Let the resulting strings be $\hat{s}_1$ and $\hat{s}_2$.
Then, it is seen that
there exists an LCS of length $|\hat{s}^1|$ iff
$\hat{d}_S(\hat{s}^1,\hat{s}^2) = (\sum_i |s_i|)-lk$.
Since the number of variables is $3lk$, which is still a polynomial of
$l$ and $k$, it is an FPT reduction.
\qed

\subsection*{A2. Proof of Theorem~\ref{thm:tree-ed-do}}

Let $F_1$ and $F_2$ be ordered forests.
Let $T_1$ (resp., $T_2$) be the rightmost tree
of $F_1$ (resp., $F_2$).
It is well-known \cite{bille05} that
the rooted ordered tree edit distance can be
computed by the following dynamic programming procedure:
\begin{eqnarray*}
D[\epsilon,\epsilon] & \leftarrow & 0,\\
D[F_1,\epsilon] & \leftarrow & D[F_1-r(T_1),\epsilon] + \delta(r(T_1),-),\\
D[\epsilon,F_2] & \leftarrow & D[\epsilon,F_2-r(T_2)] + \delta(-,r(T_2)),\\
D[F_1,F_2] & \leftarrow & \min \left\{
\begin{array}{l}
D[F_1-r(T_1),F_2]+\delta(r(T_1),-),\\
D[F_1,F_2-r(T_2)]+\delta(-,r(T_2)),\\
D[F_1-T_1,F_2-T_2]+D[T_1-r(T_1),T_2-r(T_2)]+\delta(r(T_1),r(T_2)).
\end{array}
\right.
\end{eqnarray*}
where $\delta(x,y)$ is a delta function
(i.e., $\delta(x,x)=0$, otherwise $\delta(x,y)=1$),
$\epsilon$ denotes the empty tree,
$F-v$ (resp., $F-T$) denotes the forest obtained by deleting $v$ (resp., $T$)
from $F$, and $D[t_1,t_2]$ gives the required edit distance.
In order to cope with DO-variables,
it is enough to add the following in taking the minimum at the recursion
of computing $D[F_1,F_2]$:
\begin{eqnarray*}
D[F_1-T_1,F_2-T_2],~~~~~ \mbox{if $T_1$ or $T_2$ consists of a variable node}.
\end{eqnarray*}
Then, it is clear that
the order of the time complexity is the same (i.e., $O(m^2 n^2)$)
as that of the original one.
\qed

\subsection*{A3. String Matching with Variable Length Don't Cares}

Let $a=a_1 \ldots a_p$ and $b=b_1 \ldots b_q$ be strings including
variables length don't care characters `*'.
The following is a pseudo-code for deciding whether $a$ matches $b$,
where $E[i,j]=1$ iff $a=a_1 \ldots a_i$ matches $b=b_1 \ldots b_j$.

\begin{rm}
\begin{tabbing}
\quad \= \quad \= \quad \= \quad \= \quad \= \quad \= \quad \= \kill
\> Procedure $StrMatchVDC(a,b)$\\
\> \> {\bf for all} $i,j \in \{0,\ldots,p\} \times \{0,\ldots,q\}$ {\bf do} 
$E[i,j] \leftarrow 0$;\\
\> \> $E[0,0] \leftarrow 1$;\\
\> \> {\bf for} $i=1$ {\bf to} $p$ {\bf do}\\
\> \> \> {\bf for} $j=1$ {\bf to} $q$ {\bf do}\\
\> \> \> \> {\bf if} $a_i = *$ and $b_j = *$ {\bf then}\\
\> \> \> \> \> $E[i,j] \leftarrow \max \{ \max_{i'<i} \{ E[i',j-1] \},
\max_{j'<j} \{ E[i-1,j'] \} \}$ \\
\> \> \> \> {\bf else if} $a_i=*$ {\bf then}
$E[i,j] \leftarrow \max_{j'<j} \{ E[i-1,j'] \}$\\
\> \> \> \> {\bf else if} $b_j=*$ {\bf then}
$E[i,j] \leftarrow \max_{i'<i} \{ E[i',j-1] \}$\\
\> \> \> \> {\bf else} if $a_i$ matches $b_j$ {\bf then}
$E[i,j] \leftarrow E[i-1,j-1]$\\
\> \> \> \> {\bf else} 
$E[i,j] \leftarrow 0$;\\
\> \> {\bf if} $E[p,q]=1$ {\bf then} {\bf return} true {\bf else} {\bf return} false.
\end{tabbing}
\end{rm}

It is obvious that this algorithm works in $O(p^2 q^2)$ time.

\subsection*{A4. Proof of Theorem~\ref{thm:com-unif-poly}}

The correctness of $TestCommutIdent(r_1,r_2,G(V,E))$ follows from the fact
that $f_1(t_1,t_2)$ matches $f_2(t_1',t_2')$ iff
$f_1$ and $f_2$ are the identical function symbols and
either $(t_1,t_2)$ matches $(t_1',t_2')$ or 
$(t_1,t_2)$ matches $(t_2',t_1')$.
It is clear that this procedure works in $O(mn)$ time.
Therefore, commutative match of two variable-free terms can be tested in 
polynomial time.

Next we consider the correctness of $CommutUnify(t_1,t_2)$
(see also Fig.~\ref{fig:dag}).
It is straight-forward to see that if there exists $M$ which returns `true',
$t_1$ and $t_2$ are commutatively unifiable and such a mapping
gives a substitution $\theta$ satisfying $t_1 \theta = t_2 \theta$.
Conversely, suppose that
$t_1$ and $t_2$ are commutatively unifiable.
Then,
there exist unifiable non-commutative terms $t_1'$ and $t_2'$
that are obtained by exchanging
left and right arguments in some terms in $t_1$ and $t_2$.
Let $\theta$ be the substitution satisfying $t_1' \theta = t_2' \theta$.
Then, $t_1 \theta = t_2 \theta$ holds.
We assign distinct constants to variables appearing in $t_1 \theta$.
We also construct a mapping from the remaining variables to $N(t_1) \cup N(t_2)$
by regarding $x/t \in \theta$ as a mapping of $x$ to $t$.
We construct $G(V,E)$ according to this mapping.
Then, it is obvious that $CommutIdent(r_1,r_2,G(V,E))=$true holds.

Since the number of possible mappings is bounded by $|m+n|^k$
where $k$ is the number of variables in $t_1$ and $t_2$,
$CommutUnify(t_1,t_2)$ works in polynomial time.
\qed

\begin{figure}[ht]
\begin{center}
\includegraphics[width=8cm]{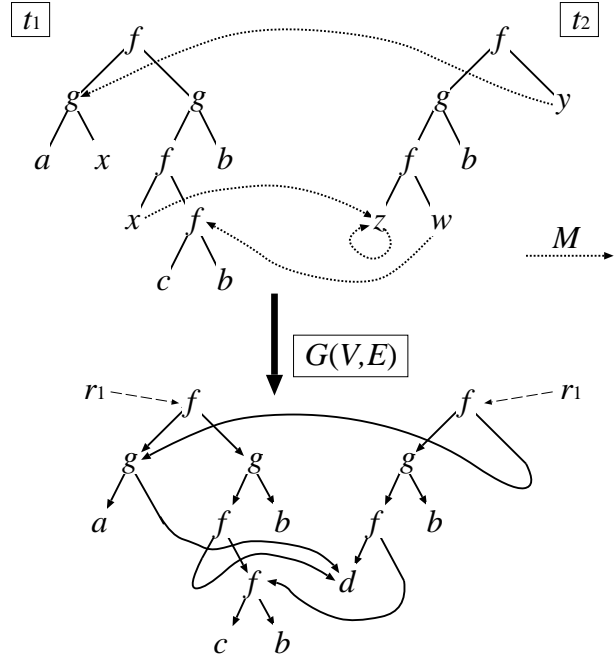}
\caption{Example of DAG $G(V,E)$ for the algorithm and proof of Theorem~\ref{thm:com-unif-poly}.}
\label{fig:dag}
\end{center}
\end{figure}

\subsection*{A5. Proof of Theorem~\ref{thm:ac-hard}}

We show an FPT-reduction from LCS.
Let $(S,l)$ be an instance of LCS
where $S = \{s_1,\ldots,s_k\}$ is a set of strings and $l$ is an integer.

Different from the previous proofs,
we cannot represent the order of letters directly.
Therefore, we represent the position of each letter by
the size of the corresponding term.

Let $f_1,f_2,f_3,f_4$ be distinct functional symbols
and $a$ be a constant not appearing in $S$.
For each $s_i[j]$,
we define the term $\hat{s}_i[j]$ by
\[
\hat{s}_i[j] = f_1(s_i[j],f_2(\overbrace{a,a,\cdots,a}^{j})).
\]
Then, we define the term $t_2$ by
\begin{eqnarray*}
t_2 & = & f_3(f_4(\hat{s}_1[1],\hat{s}_1[2],\cdots,\hat{s}_1[|s_1|]),
f_4(\hat{s}_2[1],\hat{s}_2[2],\cdots,\hat{s}_2[|s_2|]),\\
& & \cdots,
f_4(\hat{s}_k[1],\hat{s}_k[2],\cdots,\hat{s}_k[|s_k|])).
\end{eqnarray*}

Next, we define the term $t^i_j$ ($i=1,\ldots,k$, $j=1,\ldots,l)$ by
\[
t^i_j = f_1(x_j,f_2(y_{i,1},y_{i,2},\cdots,y_{i,j})),
\]
where $x_j$ and $y_{i,h}$s are variables.
Then, we define $t^i$ ($i=1,\ldots,l$) and $t_1$ by
\begin{eqnarray*}
t^i & = & f_4(z_i,t^i_1,t^i_2,\cdots,t^i_k),\\
t_1 & = & f_3(t^1,t^2,\cdots,t^l),
\end{eqnarray*}
where $z_i$ is a variable.

We show that $t_1$ and $t_2$ are unifiable if and only if
LCS of $S$ has length at least $l$.
First we show the `if' part.
Let $s_c$ be a common subsequence of $S$ such that $|s_c|=l$.
We consider a partial substitution $\theta'$ defined by
\[
\theta' = \{x_1/s_c[1],x_2/s_c[2],\ldots,x_l/s_c[l]\}.
\]
Then, it is straight-forward to see
that $\theta'$ can be extended to a substitution $\theta$ such
that $t_2 = t_1 \theta$.

Conversely, suppose that there exists a substitution $\theta$ satisfying
$t_2 = t_1 \theta$.
We can see from the construction of $t_1$ and $t_2$ that
if $x_j$ matches $s_i[h]$ and $x_{j'}$ matches $s_i[h']$
for some $i \in \{1,\ldots,k\}$ where $j < j' \leq l$,
then $h < h'$ must hold.
Let $x_j/a_j \in \theta$ ($j=1,\ldots,l$).
Then, we can see from the above property that
$s_c = a_1 a_2 \ldots a_l$ is a common subsequence of $S$.

Since the reduction can be done in polynomial time and
the number of variables is bounded by a polynomial of $k$ and $l$,
the theorem holds.
\qed

\end{document}